\newenvironment{environment}
\newcommand{\maxsize}{\textsc{M}}
\newcommand{\utility}{\mathcal{U}_{u_S}}
\newcommand{\revenue}{\mathbb{E}^{rev}_u}
\newcommand{\fees}{\mathbb{E}^{fees}_u}
\newcommand{\transacts}{p^{trans}_{u,v}}
\newcommand{\channelscost}{\sum_{(v,l) \in S} L_u(v,l)}
\newcommand{\subsectioninline}[1]{\noindent \textbf{#1:}}
\newtheorem{theorem}{Theorem}
\def\BibTeX{{\rm B\kern-.05em{\sc i\kern-.025em b}\kern-.08em
    T\kern-.1667em\lower.7ex\hbox{E}\kern-.125emX}}
\begin{document}

\title{Lightning Creation Games}

\makeatletter
\newcommand{\linebreakand}{%
  \end{@IEEEauthorhalign}
  \hfill\mbox{}\par
  \mbox{}\hfill\begin{@IEEEauthorhalign}
}
\makeatother

\author{\IEEEauthorblockN{Zeta  Avarikioti}
\IEEEauthorblockA{
\textit{TU Wien}\\
Vienna, Austria \\
georgia.avarikioti@tuwien.ac.at}
\and
\IEEEauthorblockN{Tomasz Lizurej}
\IEEEauthorblockA{
\textit{University of Warsaw $\&$ IDEAS NCBR}\\
Warsaw, Poland \\
tomasz.lizurej@crypto.edu.pl}
\and
\IEEEauthorblockN{Tomasz Michalak}
\IEEEauthorblockA{
\textit{University of Warsaw $\&$ IDEAS NCBR}\\
Warsaw, Poland \\
tpm@mimuw.edu.pl}
\linebreakand 
\IEEEauthorblockN{Michelle Yeo}
\IEEEauthorblockA{
\textit{Institute of Science and Technology Austria}\\
Klosterneuburg, Austria \\
myeo@ist.ac.at}}

\maketitle

\begin{abstract}
Payment channel networks (PCNs) are a promising solution to the scalability problem of cryptocurrencies. 
Any two users connected by a payment channel in the network can theoretically send an unbounded number of instant, costless transactions between them.
Users who are not directly connected can also transact with each other in a multi-hop fashion.
In this work, we study the incentive structure behind the creation of payment channel networks, particularly from the point of view of a single user that wants to join the network. 
We define a utility function for a new user in terms of expected revenue, expected fees, and the cost of creating channels, and then provide constant factor approximation algorithms that optimise the utility function given a certain budget.
Additionally, we take a step back from a single user to the whole network and examine the parameter spaces under which simple graph topologies form a Nash equilibrium. 
\end{abstract}

\begin{IEEEkeywords}
Payment channel networks, Nash Equilibrium, Blockchain, Network design, Layer 2, Bitcoin
\end{IEEEkeywords}

\section{Introduction}
\noindent One of the critical limitations of the major cryptocurrencies, such as Bitcoin or Ethereum, is their low transaction throughput~\cite{chauhan2018blockchain,croman2016scaling,jain2021we}. For instance, given Bitcoin's block size limit of 1MB and the average block creation time of 10 minutes, its throughput  is limited to tens of transactions per second. This is clearly not enough to facilitate the widespread everyday use of Bitcoin. For comparison, credit card payment systems such as VISA handles approximately
7K transactions per second~\cite{visa}.
 
Payment Channel Networks (PCNs), such as Bitcoin's Lightning Network~\cite{lightning} and Ethereum's Raiden Network~\cite{utomo2020blockchain}, are second-layer solutions that are designed to address the above scalability problem. The core idea is to process the majority of transactions off-chain by enabling nodes to establish bilateral payment channels; each channel acts as a joint account between the channel participants. To preserve security, opening a channel requires depositing funds to a shared address on-chain. These funds serve as secure collateral to possibly many off-chain transactions between both parties. When the channel is closed, the final balance is settled on-chain.

Importantly, each node can establish such payment channels with many other nodes. This gives rise to a network that allows for funds transfers to non-neighbors through a path of intermediaries. 
Because opening and maintaining a channel requires locking up funds, serving as an intermediary results in opportunity costs.  
To mitigate this cost, intermediary nodes earn transaction fees for their services.

The protocols underlying PCNs have attracted a lot of attention in the literature~\cite{gudgeon2020sok}. In addition to analyzing cryptographic underpinnings of the PCN’s security proofs~\cite{kiayias2019composable,malavolta2017concurrency}, an effort has been made to understand game-theoretic aspects of these networks either with respect to security e.g.,~\cite{rain2021towards,avarikioti2021brick,avarikioti2022suborn,mccorry2019pisa,avarikioti2020cerberus}, or economics, e.g.,~\cite{wangriding,engelshoven2021merchant}. 

A particularly interesting question is how the nodes should choose \emph{where   to connect} to a PCN and \emph{what portion of a budget should be locked} to distinct channels. This is important as their choice not only affects the situation of individual nodes but also influences the resulting network as a whole. However, this issue has been weakly studied in the literature. In  fact, most PCN implementations (e.g., the Lightning Network) still propose a simple heuristic for new nodes, suggesting connecting to a trusted peer or a hub.

In this work, we answer this question by first presenting several attachment strategies for newly-joining nodes in a PCN.
The first key challenge to this task is to define the new node's utility function that accurately reflects the key objectives of new PCN users. A newcomer has to weigh the cost of creating channels and locking up capital against the profits stemming from these connections and the node's position in the network. Furthermore,  the utility function should be efficiently computable, so that it can be used in practice by new nodes, posing a second challenge.

Unfortunately, the models of the utility function considered so far in the literature do not take all the above aspects into account. In particular, Guasoni et al.~\cite{lnecon} analyse the cost of channel creation, and establish conditions under which two parties would create unidirectional or bidirectional channels between themselves, as opposed to transacting on-chain. However, the utility function in \cite{lnecon} only accounts for the cost of channel creation but neglects profits from routing transactions and fees a user could encounter. Avarikioti et.\ al~\cite{avarikioti2019payment,avarikioti2020ride} and Ersoy et.\ al~\cite{ersoy2019profit}, on the other hand, account for fees and profits from routing transactions through the PCN but neglect the opportunity costs from the locked capital and consider only a simplified transaction model where users transact with each other with uniform probability.

We take up the first challenge to define a utility function that accurately depicts the gains and costs of newly joining nodes. In particular, we account for on-chain costs for opening channels, routing fees paid to and by the node due to its position in the network, and opportunity costs for locked capital. We further leverage a realistic transaction distribution where nodes transact with other nodes with probability proportional to their degree, inspired by the well-known Barab{\'a}si-Albert preferential attachment model~\cite{barabasi1999emergence}. We believe this transaction distribution approximates well real-life scenarios where nodes transact more often with big vendors and service providers.
We further address the second challenge by providing a series of approximation algorithms to efficiently compute the optimal connection strategy for newly-joining nodes.
The approximation ratio and runtime of each algorithm depend on how much freedom the node has to distribute its budget on the channels, highlighting an interesting trade-off.

Apart from the myopic analysis for a single joining node, we also examine the effect our strategies may have on the topological structure of a PCN. In particular, we examine simple graph structures, i.e., path, circle, and star graphs, to determine under which conditions these constitute stable graphs, where no node may increase its utility by changing its strategy (Nash equilibrium).
Naturally, which topologies are stable or not heavily depends on the parameters of the transaction distribution. We thus identify the exact parameter space in which each topology constitutes a Nash equilibrium.

In summary, our  contributions are as follows:

\begin{itemize}
    \item We extend the utility function of~\cite{avarikioti2020ride} to incorporate a \emph{realistic transaction model and opportunity costs}. To that end, we consider transaction distributions where users transact with other users in proportion to their degree instead of uniformly at random as in~\cite{avarikioti2020ride,ersoy2019profit,avarikioti2019payment}. 

    \item We provide a series of \emph{approximation algorithms} that maximize our utility function under different constraints. In particular, we identify a \emph{trade-off between the runtime of the algorithm and capital distribution constraints}, i.e., how much capital is locked in each channel at its creation.
    
    
    \item Finally, we examine simple graph topologies and determine under which parameter space of the transaction distribution, they form \emph{Nash equilibria}.
    
\end{itemize}


\section{The Model}
\label{sec:preliminaries}
In this section, we outline our model which is an extension of the model introduced in~\cite{avarikioti2020ride}. 
We alleviate several unrealistic assumptions introduced in \cite{avarikioti2020ride}, thus providing more meaningful insights on the connection strategies and expected network structure of PCNs. We indicate these assumptions below.

\subsection{Payment channel networks and routing fees} 
Payment channels provide a way for users on the blockchain to transact directly with each other off-chain, thereby avoiding the high fees and latency involved in transacting on the blockchain. 
Any two users on the blockchain can open a payment channel with each other by locking some of their funds to be used only in this channel, much like opening a joint account in a bank. 
Once the channel is created, both users can send each other coins by updating the channel balances in favour of the other party (see~\Cref{fig:channel_example} for an example). 
For each payment (channel balance update), the respective capital must be respected, meaning that a party cannot send more coins than it currently owns to the counterparty.
To close their payment channel, the parties post on-chain a transaction that depicts the latest mutually agreed distribution of their funds. The closing transaction can be posted either in collaboration or unilaterally by one channel party. Note that \emph{posting a transaction on-chain bears a cost}: the fee to the miner that includes the transaction on the blockchain.

A payment channel network comprises of several two-party channels among users of the blockchain. 
Each user of the network is represented by a vertex while each (bidirectional) channel among two parties is represented by $2$ directed edges (one in each direction) connecting the two vertices corresponding to the parties. 
We model each bidirectional channel as $2$ directed edges to take into account the balance on both ends of the channel which can be different and thus impose different limits on the payment amount that can be sent in each direction.
More concretely, let us represent the topology of a payment channel network with a directed graph $G=(V,E)$ with $|V| = n$, and $|E| = m$.
For node $u \in V$, let $Ne(u)$ denote the set of in- and out-neighbors of $u$.

\begin{figure}[htb!]
    \centering
    \includegraphics[width=0.65\linewidth]{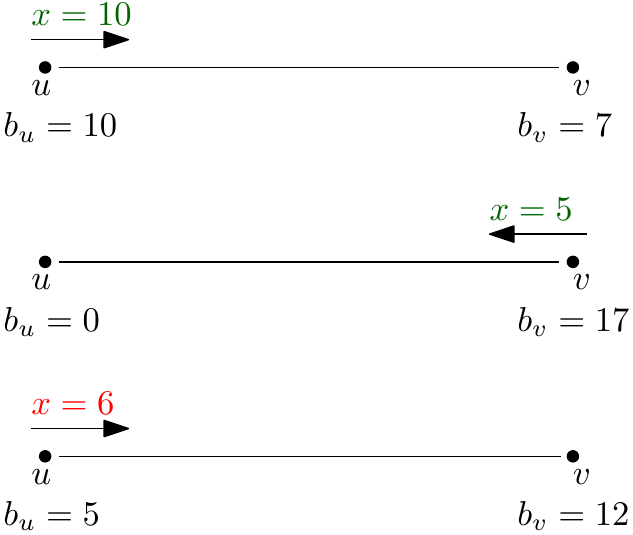}
    \caption{Example of payments going through a channel between $2$ users $u$ and $v$. $b_u$ and $b_v$ denote the balances of $u$ and $v$ in the channel and are updated with every successful payment. The last payment of size $6$ going from $u$ to $v$ is unsuccessful as the size of the payment is larger than $b_u=5$.}
    \label{fig:channel_example}
\end{figure}

Users who are not directly connected with a channel can still transact with each other if there exists a path of channels between them in the PCN graph. For instance, if Alice and Carol do not share a channel, but Alice shares a channel with Bob and Bob with Carol then Alice may send coins to Bob and then Bob to Carol\footnote{There exist techniques, namely HTLCs, to ensure that the transactions on a path will be executed atomically, either all or none, so the intermediaries do not lose any funds~\cite{lightning}.}.
However, each channel must hold enough coins to ensure the feasibility of the transaction routing. In our previous example, if Alice wants to send to Carol 5 coins through Bob, then Alice must own at least 5 coins in her channel with Bob, and Bob at least 5 coins in his channel with Carol, at the time of the transaction execution. 
The users along the transaction path who are not the sender or receiver (e.g., Bob) typically \emph{charge a fee for forwarding the transaction} that depends on the transaction amount and is publicly announced. The precise form of the fee function for forwarding transactions is determined by the user who owns the coins. That is, given a payment channel $(u,v)$ and a fixed transaction amount of $t$, the fees incurred from forwarding $t$ can even differ depending on whether $t$ is forwarded from $u$ to $v$ or from $v$ to $u$.

In our model, we assume transactions ($tx$) are of size at most $T >0$ and all intermediary nodes take the same -- global -- fee function $F: [0,T] \longrightarrow \mathbb{R}^+$ which is an abstraction for an average fee function. We denote by $f_{avg}$ the value of the average fee when using the global fee function $F$. 
That is, $f_{avg} = \int_0^T p_{tx \ \text{size} = t} \cdot F(t) dt$, where $p_{\text{tx size} = t}$ is a global probability of occurrence of a transaction with size $t$. We assume that $f_{avg}$ is publicly known (recall that the fee functions are publicly announced in PCNs).

\subsection{PCN transactions}\label{subsec:transactions}
In the following, we alleviate the assumption of \cite{avarikioti2020ride} that transactions are uniformly distributed among the PCN users, and introduce a more realistic transaction model.

\subsectioninline{Transactions}
\label{subsection:model}
Let $N_u$ denote the average number of transactions sent from user $u$ over a unit of time. We denote with $N$ the sum of the number of all transactions sent by users in a unit of time $N = \sum_{v \in V} N_v$.
We assume a user $u$ joining the network knows the distribution of transactions in the network.
 These assumptions equally allow each user to estimate the mean rate (denoted by $\lambda_{uv}$) of transactions going along any directed edge $(u,v)$ which, we assume, follows a Poisson process with rate $\lambda_{uv}$. 
We also stress that this estimation can be done efficiently in time $\mathcal{O}(n^2)$, by calculating shortest paths using e.g., Dijkstra's algorithm~\cite{Dijkstra59} for each pair of nodes in the network.

\subsectioninline{Reduced subgraph with updated capacities:}
The topology of the PCN can change with the size of transactions due to balance constraints: some directed edges do not have enough capacity to forward transactions when the transaction size is too large.  
However, given that we assume users know the distribution of transactions in the network, and that the capacity and time of channel creation are publicly posted on the blockchain, users can estimate the expected balance on each end of all channels in the network. 
Thus, for the rest of the paper, we consider that all our proposed algorithms for a given transaction of size $x$ are computed on a subgraph $G'$ of the original PCN $G$ that only takes into account directed edges that have enough capacity to forward $x$.


\subsectioninline{Transaction distribution}\label{subsection:model_params}
In the topological studies included in this work, we assume that the probability that any two users transact with each other is proportionate to their degree. Specifically, we use the \textit{Zipf distribution}~\cite{zipf} to model the occurrence of any two users transacting with each other. That is, assume for a user $u$ a ranking of all other users in the network according to their degree, breaking ties arbitrarily. That is, the highest degree vertex is given rank 1, the second highest is given rank 2, etc. Then for some user-specific parameter $s_u > 0$, the probability $p^{trans}_{u,v}$ that $u$ transacts with another user $v \in V\setminus \{u\}$ with rank $k$ is:

\begin{equation}\label{eq:prob}
p^{trans}_{u,v} = \frac{1/k^{s_u}}{\sum_{i=1}^n 1/i^{s_u}}.
\end{equation}

We note that the Zipf distribution is widely used in the natural and social sciences for modelling data with a power law distribution \cite{Salge2015,Aitchison2016}. It is also frequently used in the context of social networks~\cite{Bilo0LLM21} and thus seems a natural model for approximating the probability of any 2 users transacting in a payment channel network.

Let the edge betweenness centrality be defined as:

$$EBC(e) := \sum_{s, r \in V; s \neq r; m(s,r)>0} \frac{m_e(s,r)}{m(s,r)},$$
\noindent where $m_e(s,r)$ is the number of shortest paths that traverse through the edge $e$ and $m(s,r)$ is the total number of shortest paths from $s$ to $r$. The transaction rate $\lambda_e$ for all directed edges $e$ in $E$ can be estimated by the edge betweenness centrality of the edge $e$ weighted by the probability of any two vertices $s$ and $r$ transacting with each other. That is, for a directed edge $e$, we first define the probability $p_e$ that the edge $e$ is chosen in a single transaction:

\begin{equation}\label{eq:pe}
    p_e = \sum_{s, r \in V; s \neq r; m(s,r)>0} \frac{m_e(s,r)}{m(s,r)} p^{trans}_{s,r}.
\end{equation}

Let $N$ denote the average number of transactions that happen in a unit of time sent out by a user in the network. 
We assume these transactions are independent. 
The average number of times a directed edge $e=(u,v)$ is chosen in $N$ transactions is the transaction rate $\lambda_{e}$ and is simply $N\cdot p_{e}$.

In this work, we slightly modify the original Zipf distribution to ensure that the probability of any user transacting with two other distinct users having the same degree is equal. 
We do this by simply averaging the Zipf probability of transacting with every user with the same degree.
Below we propose a detailed method of calculating the probability that a given node $u$ transacts with any other node $v$ in the network.

Given a network $G = (V, E)$,  we first consider the subgraph $G' = (V' = V \setminus \{u\}, E')$ which is created by removing the node $u$ all of its incident edges from $G$.
Then, we sort all nodes in $V'$ by their \emph{in-degree} and then assign a \emph{rank-factor} -- $rf(v)$ to each node $v$ in $V'$. 
Since we want to ensure that every node with the same in-degree has the same rank-factor, we simply average the ranks of nodes with the same in-degree. 
In more detail, let $r_0(v) $ denote the smallest rank of a node $v' \in V'$ such that the in-degree of $v'$ is equal to the in-degree of $v$. 
Let $n(v)$ be the number of nodes in $V'$ with the same in-degree as $v$. 
The rank factor of $v$ can be computed as follows: 
$$rf(v) = \frac{\frac{1}{r_0^s(v)} + \ldots + \frac{1}{(r_0(v)+n(v))^s}}{n(v)}$$
The probability that $u$ transacts with $v \in V'$ is then:
$$\transacts= \frac{rf(v)}{\sum_{v' \in V'} rf(v')}$$

Finally, observe that the modified Zipf distribution satisfies the following property: $r_1(v_1) < r_2(v_2) \implies rf(v_1) > rf(v_2)$. This holds because $rf(v_1) \geq \frac{1}{(r_0(v_1)+n(v_1))^s}$ and $rf(v_2) \leq \frac{1}{(r_0(v_2))^s}$.

\subsection{Utility function of a new user}\label{sec:utility}
When a new user joins a PCN, they must decide \emph{which channels to create and how much capital to lock in each channel}, while respecting their own budget. In their decision, the user must factor the following: (a) the on-chain costs of the channels they choose to open, (b) the opportunity cost from locking their capital for the lifetime of each channel, (c)  the potential gains from routing transactions of others (routing fees),  (d) the routing fees they must pay to route their own transactions through the PCN, (e) their budget.
Intuitively, the more channels a user opens and the higher the amount of the total capital locked, the more fees they will obtain from routing and the less cost they will bear for routing their own transactions. In other words, increasing the initial costs also increases the potential gains. 
Our goal is to analyze these trade-offs and find the sweet spot that maximizes the benefits for a newly-joining user with a specific budget. 
We account for all these factors in a realistic manner when we design the utility function of the user, in contrast to previous work~ \cite{avarikioti2020ride,avarikioti2019payment} where the opportunity cost was omitted, and the routing fees were calculated naively (i.e., constant fees and uniform transaction distribution). 

\begin{figure}[t]
    \centering
    \includegraphics[width=0.6\linewidth]{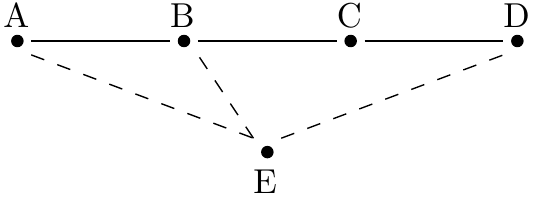}
    \caption{$E$ joins a PCN with existing users $A$, $B$, $C$, $D$.
    $E$ plans to transact with $B$ once a month, and $A$ usually makes $9$ transactions with $D$ each month.
    We assume the transactions are of equal size,  and transaction fees and costs are of equal size.
    $E$ has enough budget only for $2$ channels, with the spare amount of funds to lock equaling $19$ coins. $E$ should create channels with $A$ and $D$ of sizes $10$ and $9$ to maximize the intermediary revenue and minimize $E$'s own transaction costs.
    }
    \label{fig:example}
\end{figure}

\subsectioninline{User strategy and constraints}
Consider a fixed PCN $G = (V,E)$ and a new user $u$ that wants to join $G$.
Furthermore, let us denote the set of possible actions by $\Omega := \{(v_i, l_i)\}_{i}$, where each element $(v_i, l_i) \in \Omega$ represents a node $v_i$ that $u$ wants to connect to by locking in an amount of $l_i >0$ on the corresponding channel. The strategy of $u$ is to select a set of users (a strategy) $S \subset \Omega$ that $u$ wants to connect to and how much funds to deposit in these channels. Note that both $\Omega$ and $S$ may contain more than one channel with the same endpoints, but different amounts of locked funds on each end. 
We also assume $u$ has some budget $B_u > 0$ to create and fund channels and $u$'s budget constraint imposes the requirement that for the strategy $S \subseteq \Omega$ chosen by $u$, $\sum_{j=1}^{|S|} [C + l_j ] \leq B_u$. Finally, we remark that $\Omega := \{(v_i, l_i)\}_{i}$ may contain $v_i$ with continuously many values of $0 \leq l_i \leq B_u$. We will call it a continuous action set. In this case, we will operate on the set of vertices $\Omega^V \subseteq V$ for which the user $u$ will choose a strategy $S$ consisting of pairs $(x_j, l_j): x \in \Omega^V, 0 \leq l_j \leq B_u$.
\Cref{fig:example} highlights a simple example of the decision-making process of a new user that wants to join an existing PCN.

Now we define the \emph{utility function} for a new user $u$ that wants to join a PCN $G$ and has a fixed budget $B_u$. 
The goal of $u$ is to \emph{choose any strategy $S = \{(v_i, l_i)\}_{i} \subseteq \Omega $ to maximize their expected profit within a given budget $B_u$.}  
The expected profit (utility) of $u$ is essentially the expected revenue from forwarding transactions through its channels and collecting the routing fees, minus the costs of creating the channels (on-chain fees and opportunity cost) and the expected fees encountered by $u$ when sending transactions to other users in the network.

\subsectioninline{Channel costs} \label{sec:costs}
Typically, two on-chain transactions are needed to open and close a channel~\footnote{We omit the case when one of the parties may commit fraud, and a third transaction is necessary to award the cheated party all the channel funds. This case is outside the scope of the paper as the costs for publishing such a transaction may be covered by the total funds of the channel}.
Recall that each blockchain transaction costs a fee to the miners, denoted $C$. 

The cost of the \emph{opening} transaction can be shared by two parties, and we assume that parties only agree to open channels if they share this cost equally ($C/2$ each).
The cost of the closing transaction, on the other hand, is paid by both parties when the channel closes in collaboration, or by the party that closes the channel when the channel is closed unilaterally. To model the costs of a channel between $u$ and $v$, we  assume that it is equally probable that the channel closes in one of the three ways: unilaterally by $v$, unilaterally by $u$, in collaboration of $u$ and $v$. Thus, the cost of the closing transaction is on expectation $C/2$ for each party. Hence, in total, the channel cost for each party is $C$.

We also account for the opportunity cost of locking funds (as opposed to using or storing them elsewhere) in a channel for the lifetime of the channel.
Suppose two users $u$ and $v$ wish to open a channel locking $c_u$ and $c_v$ amount of coins respectively. Let $l_i, i \in V$ the opportunity cost defined by user $i$; that is, typically a function of the amount of coins $c_i$, e.g., $l_i = r \cdot c_i$, $r$ constant (a standard economic assumption due to the non-specialized nature of the underlying coins~\cite{hall1998macroeconomics}).
We denote the total cost for opening a channel for user $u$ by  $L_u(v,l) = C + l_u$. The cost of user $v$ is symmetric.

We direct the reader to the work by Guasoni et al.~\cite{lnecon} for a more detailed model of channel costs. 
We note that our computational results still hold in this extended model of channel cost. We further note that while the utility function in \cite{lnecon} only accounts for the cost of channel creation, in our work we also consider  the potential profit from routing transactions and fees a user could encounter.

    







\subsectioninline{Revenue from routing fees}
%
Each user of the PCN may route transactions of others through their channels in exchange for a routing fee. Each time a user $u$ provides such service, $u$ gains revenue equal to $f_{avg}$ as described in section~\ref{subsec:transactions}.
Specifically, the expected revenue gained by a user $u$ over a unit time interval from routing transactions in the PCN is the sum of the fees weighted by the average transaction rate from all of $u$'s incident channels: 




\begin{equation}\label{eq:rev}
    \revenue = \sum_{v_i \in Ne(u)} \lambda_{ u v_i} \cdot f_{avg}.
\end{equation}
We write $\mathbb{E}^{rev}_{u_S}$ when we want to explicitly say that the user $u$ already added edges from $S$ to the network.

\subsectioninline{Fees encountered by the user}
Whenever a user in the network $u$ makes a payment to another user $v$, $u$ has to pay some amount of fees to all the intermediary nodes in the payment path from $u$ to $v$.
Let $d(u,v)$ be the length of the shortest path from $u$ to $v$ in the network and let us assume that $u$ pays $f_{avg}^T$ to every intermediary node in the path. 
The expected fees encountered by $u$ with a stream of $N_u$ output transactions is the sum of costs which increases proportionally with the distance between any two users:

$$\fees = N_u \cdot \sum_{v \in V; v \neq u} d(u,v) \cdot f_{avg}^T \cdot \transacts$$

We write $\mathbb{E}^{fees}_{u_S}$ when we want to explicitly say that the user $u$ already added edges from $S$ to the network.
We note that when two users $u$ and $v$ are not connected, then $d(u,v) = +\infty$.

\subsectioninline{Objective of the user}
Here, we  combine all the costs calculated above and compute the utility function of a newly joining node.
The expected utility of a user $u$ under a given strategy $S \subseteq \Omega$ is the profit gained from collecting the fees, minus the fees paid for sending out transactions, and minus the costs of the channels. Formally, 

$$\utility = \revenue - \fees - \channelscost$$

We assume the utility of a disconnected node (i.e.\ a node that is not connected to any other node in the network) is $-\infty$.

The objective of $u$ is to select a subset of users to connect to as well as the amount of funds to lock into these channels that maximises their expected utility subject to the budget constraints. Formally:

$$\max_{S \in \Omega} \mathcal{U}_{u_{S}} \text{ s.t. } \sum_{(v,l_u) \in S} [C + l_u] \leq B_u $$

\section{Optimisation algorithms}\label{sec:optimization}
Having defined the utility and objective for a new user $u$ in~\Cref{sec:utility}, we now propose several algorithms to optimise the objective in this section. 
We begin by establishing some properties of our objective function.
We first show that our utility function is submodular but not necessarily monotone and not necessarily non-negative.
Thus, we cannot apply standard algorithms to optimise it efficiently with guarantees on the approximation ratio. 
We thus propose a series of constraints on the actions of the new user $u$ and define a solution for the objective in each constrained setting. 
We then provide a corresponding optimisation algorithm for each setting that comes with guarantees on the approximation ratio. In the following, let $[k]$ denote $\{1, \dots, k\}$.

\subsection{Properties of the objective function}\label{properties}
Whenever we add a new edge, its estimated average transaction rate will depend on the current topology of the network and the capacities of the channels in the network.
We first show that the objective function is submodular. 
Let $S \subset \Omega$ be a strategy. 
Note that we allow the algorithm to add more than one channel with the same endpoint $v$ but different amounts of funds $l_i$ to the strategy set $S$.

\begin{theorem}
The expected utility function $\utility$ is submodular.
\end{theorem}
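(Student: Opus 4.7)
The plan is to split $\utility = \revenue - \fees - \channelscost$ and check each term separately, using the facts that sums of submodular functions are submodular, modular functions are trivially submodular, and the negative of a supermodular function is submodular. The channel-cost term $\channelscost$ is immediately modular in $S$, since each selected channel $(v,l)$ contributes the fixed quantity $L_u(v,l) = C + l_u$, independent of the rest of the strategy.

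For the fees term, I would show that each distance $d_S(u,v)$ is \emph{supermodular} in the added-edge set $S$, so $\fees$, being a nonnegative-weighted sum of such distances, is also supermodular and hence $-\fees$ is submodular. The supermodularity claim is that for $A \subseteq B$ and $e \notin B$, $d_A(u,v) - d_{A \cup \{e\}}(u,v) \ge d_B(u,v) - d_{B \cup \{e\}}(u,v)$, i.e., adding an edge produces a larger distance reduction when the ambient graph is sparser. I would verify this by case analysis on whether the shortest $u$-$v$ path in $B \cup \{e\}$ actually uses $e$: if not, the right-hand side vanishes and monotonicity of shortest paths in the edge set finishes it; if it does, decomposing the path at $e = (x,y)$ gives $d_{B \cup \{e\}}(u,v) = d_B(u,x) + 1 + d_B(y,v)$, and comparing with the analogous construction in $A \cup \{e\}$ together with the monotonicity relations $d_A(u,x) \ge d_B(u,x)$ and $d_A(y,v) \ge d_B(y,v)$ closes the argument.

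For the revenue term, I would unfold $\lambda_{uv}$ using Equation~\ref{eq:pe} so that $\revenue$ becomes $f_{avg} N$ times a weighted sum, over ordered source-destination pairs $(s,r)$, of the fraction of shortest $s$-$r$ paths that cross $u$ along an incident edge. Submodularity then reduces to the pairwise claim that, for each $(s,r)$, the marginal $u$-crossing fraction gained from adding a new channel $(u,w)$ to $S$ is weakly decreasing in $S$; intuitively, whenever $B \supseteq A$ already contains more $u$-incident edges, many shortest $s$-$r$ paths that the new edge would route through $u$ already traverse $u$ in $B$, so the marginal contribution in $B$ cannot exceed that in $A$.

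Combining the three pieces yields submodularity of $\utility$. I expect the main obstacle to be the revenue term: adding a $u$-incident edge can simultaneously create new shortest paths through $u$, divert existing $u$-routed traffic onto the new channel, and even open shortcuts that take entire $s$-$r$ flows \emph{away} from $u$, so carefully bounding these effects per $(s,r)$ pair to extract net diminishing returns on $u$'s total coverage is more delicate than the shortest-path argument used for fees.
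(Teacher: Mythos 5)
Your decomposition and your handling of the channel-cost and fee terms match the paper's proof in spirit, but your treatment of the revenue term diverges from it and, as you yourself suspect, is exactly where the argument breaks. The paper does not unfold $\lambda_{uv}$ through the betweenness formula; it explicitly declares that $\lambda_{xu}$ is treated as a fixed value, so the marginal revenue of adding $X=(x,l_{ux})$ is the constant $\lambda_{xu}\cdot f_{avg}$ independent of $S$, making $\revenue$ modular and the claim immediate. If you instead recompute $u$'s shortest-path coverage after each addition, the resulting set function is \emph{not} submodular: take $s,r$ at distance $10$ in $G$, with $w$ adjacent to $s$ and $w'$ adjacent to $r$. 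Adding only $(u,w)$, or only $(u,w')$, leaves $u$ a degree-one dead end carrying no $s$--$r$ traffic, whereas adding both creates the length-$4$ path $s\text{--}w\text{--}u\text{--}w'\text{--}r$ through which every shortest $s$--$r$ path now runs. The marginal coverage of $(u,w')$ is $0$ at $\emptyset$ and $1$ at $\{(u,w)\}$, a strict violation of diminishing returns. So the ``careful bounding'' you defer to cannot exist; the theorem holds only under the paper's convention that $\lambda$ (and likewise $\transacts$) are frozen, and you need to invoke that convention rather than prove submodularity of the recomputed betweenness.

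A secondary issue: the supermodularity lemma you state for $d_S(u,v)$ under \emph{arbitrary} edge additions is false, by the same complementarity phenomenon (an edge $(w,v)$ with $w$ far from $u$ reduces $d(u,v)$ by nothing on its own but by a lot once $(u,w)$ is present), and your case analysis stalls precisely where you would need $d_A(u,v)-d_B(u,v)\ \ge\ \bigl(d_A(u,x)-d_B(u,x)\bigr)+\bigl(d_A(y,v)-d_B(y,v)\bigr)$, which monotonicity of distances does not supply. What rescues the fee term here (and what the paper implicitly uses) is that every edge in $S$ and the candidate edge $X$ are all incident to $u$, so $d_S(u,v)=1+\min_{(w,l)\in S} d_G(w,v)$ is a minimum of constants indexed by $S$; the minimum of constants is supermodular by a one-line computation, and the nonnegative weighting by $\transacts$ preserves this. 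Restricted to that structure your fee argument goes through, but as written the lemma is stated at a level of generality at which it is simply untrue.
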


\begin{proof}
We split $\utility$ into three components that sum to $\utility$ and show that each component is submodular. Since the sum of submodular functions is submodular, the claim follows.

We first rewrite $\utility$ as 
\begin{equation}\label{eq:submodular}
    \utility = \revenue + \left(- \fees\right) + \left(- \channelscost\right).
\end{equation}
Consider the configurations with two strategies $u_{S_1}, u_{S_2}$ with $S_1 \subseteq S_2$, and consider a pair $X = (x,l_{ux}) \notin S_2$. Recall that a function $g$ is submodular if $g(u_{S_2 \cup \{X\}}) -g(u_{S_2}) \leq g(u_{S_1 \cup \{X\}}) - g(u_{S_1})$.

Now first observe that 
\begin{align*}
    \mathbb{E}^{rev}_{u_{S1 \cup \{x\}}}- \mathbb{E}^{rev}_{u_{S1}} &= \mathbb{E}^{rev}_{u_{\{X\}}} = \lambda_{x u} \cdot f_{avg} = \mathbb{E}^{rev}_{u_{S2 \cup \{X\}}} - \mathbb{E}^{rev}_{u_{S2}}
\end{align*}

Hence the expected revenue function $\mathbb{E}^{rev}_{u_{S}}$ is submodular.
Note that in the calculations we assume that $\lambda_{xy}$ is a fixed value.

Now we show that the second component of \ref{eq:submodular} is submodular.
That is,
$-\mathbb{E}^{fees}_{u_{S}} = -\lambda_u \sum_{v \in V; v \neq u} d(u,v) \cdot f_{avg}^T \cdot \transacts$ is submodular. 
Let us denote the marginal contribution in terms of the expected fees of adding $X$ to strategy $S$ as $MC_S(X) :=  \mathbb{E}^{fees}_{u_{S}} -\mathbb{E}^{fees}_{u_{S \cup \{X\}}}$.
We note that $MC_S(X)$ only changes when one adds a pair $X=(x,l_{ux})$ to $S$, such that a shortest path from $u$ to some $v$ goes through the vertex $x$ in the new configuration $S \cup \{X\}$, i.e.:
$$MC_S(X) = 
\lambda_u f_{avg}^T \sum_{\substack{v \in V; v \neq u;\\x \in sp_{S \cup \{X\}}(u,v)}}  \transacts \Big[ d_{S}(u,v) - d_{S \cup \{X\}}(u,v)  \Big]$$
Recall that $d(u,v)$ as defined for two disconnected nodes $u,v$ is $+\infty$. Thus, $d_{S_1 \cup \{X\}}(u,v) - d_{S_1}(u,v) \leq 0$ as $X \notin S_1, S_2$. 
Moreover, as $v \in S_1, S_2$ are direct neighbours of $u$ in all configurations, then $|d_{S_1}(u,v)-d_{S_1 \cup \{X\}}(u,v)| > |d_{S_2}(u,v)-d_{S_2 \cup \{X\}}(u,v)|$. 
Hence, we conclude that $MC_{S_1}(X) > MC_{S_2}(X)$.
Note that in the calculations we assume that $\transacts$ is a fixed value.

Finally, we show that the last component $- \channelscost$ in \ref{eq:submodular} is submodular.
The marginal contribution of $X=(x,l_{ux})$ to the channel costs given $u_{S_1}$ is simply the cost of a single bidirectional channel between $u$ and $x$, i.e. $L_u(v,x)$. 
This is exactly equal to the marginal contribution given $u_{S_2}$. 
\end{proof}
Now, we show that although the objective function is submodular, it is unfortunately non-monotone. That is, for any two strategy sets $S_1, S_2$ with $S_1 \subset S_2$, it is not necessarily the case that $\mathcal{U}_{u_{s_1}} \leq \mathcal{U}_{u_{s_2}}$.

\begin{theorem}
The expected utility function $\utility$ is not necessarily monotone, but the modified utility function $\utility' = \revenue-\fees$ is monotonically increasing.
\end{theorem}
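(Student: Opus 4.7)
The plan is to prove the two claims separately. For non-monotonicity of $\utility$, I would construct a simple counterexample exploiting the fact that the channel-cost term $L_u(v,l) = C + l_u$ is strictly positive and scales with the locked capital, while the marginal contribution of a new channel to revenue and to fee savings can be arbitrarily small. Concretely, take $S_1$ to be any strategy that already connects $u$ to the network and pick $X = (x, l_{ux}) \notin S_1$ where $x$ is a node with negligible traffic (so that $\lambda_{xu} \approx 0$, making $\mathbb{E}^{rev}_{u_{S_1 \cup \{X\}}} - \mathbb{E}^{rev}_{u_{S_1}}$ essentially $0$) and such that $u$ is already close to $x$ through $S_1$ (so $\mathbb{E}^{fees}$ barely changes). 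Choosing $l_{ux}$ large enough, $L_u(x, l_{ux}) = C + l_{ux}$ strictly exceeds the (tiny) revenue gain, and thus $\utility[u_{S_1 \cup \{X\}}] < \utility[u_{S_1}]$, with $S_1 \subset S_1 \cup \{X\}$, violating monotonicity.

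For monotonicity of $\utility' = \revenue - \fees$, I would write $\utility'_{u_{S \cup \{X\}}} - \utility'_{u_S}$ as the sum of the marginal change in revenue and the marginal change in $-\fees$, and argue both are nonnegative. The revenue part is immediate from the calculation already used in the submodularity proof: adding $X = (x,l_{ux}) \notin S$ strictly adds a new neighbor $x$ to $Ne(u)$, contributing $\lambda_{xu} \cdot f_{avg} \geq 0$ to $\revenue$, while under the paper's assumption that each $\lambda_{uv_i}$ is a fixed value the other summands remain unchanged. Hence $\mathbb{E}^{rev}_{u_{S \cup \{X\}}} \geq \mathbb{E}^{rev}_{u_S}$.

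The fee-part reduces to a standard shortest-path monotonicity fact: if $G_S$ denotes the graph after $u$ installs the channels in $S$, then $G_{S \cup \{X\}}$ contains strictly more directed edges than $G_S$, so for every $v \in V \setminus \{u\}$, $d_{S \cup \{X\}}(u,v) \leq d_S(u,v)$ (including the case where $d_S(u,v) = +\infty$ becomes finite). Since $f_{avg}^T \geq 0$ and $\transacts \geq 0$, termwise monotonicity of the sum
\[
\mathbb{E}^{fees}_{u_S} = N_u \cdot \sum_{v \in V;\, v \neq u} d_S(u,v) \cdot f_{avg}^T \cdot \transacts
\]
gives $\mathbb{E}^{fees}_{u_{S \cup \{X\}}} \leq \mathbb{E}^{fees}_{u_S}$, i.e., $-\fees$ is non-decreasing. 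Adding the two non-decreasing contributions yields $\utility'_{u_{S \cup \{X\}}} \geq \utility'_{u_S}$, so $\utility'$ is monotone.

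The main subtlety, rather than an obstacle, is being explicit about the convention $d(u,v) = +\infty$ for disconnected pairs: one must check that the step from an infinite distance (strategy $S$) to a finite distance (strategy $S \cup \{X\}$) is also a \emph{decrease}, so that the fee contribution from a newly reachable vertex is properly counted as a non-positive change in $-\fees$. Once that convention is handled, both pieces of the argument are a direct reuse of observations already made in the submodularity proof.
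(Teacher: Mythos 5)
Your proof is correct, and for the monotonicity of $\utility' = \revenue - \fees$ it follows essentially the same route as the paper: the revenue term gains a nonnegative summand $\lambda_{xu}\cdot f_{avg}$ per added channel (under the paper's standing assumption that the $\lambda$ and $\transacts$ values are treated as fixed), and the fee term can only shrink because adding edges can only shorten shortest-path distances, including the transition from $d_S(u,v)=+\infty$ to a finite value; the paper handles the revenue half by citing Ersoy et al.\ rather than arguing it directly, but the substance is the same. Where you genuinely diverge is the non-monotonicity of the full utility $\utility$: the paper merely observes that two components are monotonically increasing while $-\channelscost$ is monotonically decreasing and declares $\utility$ non-monotone, which as stated is a non sequitur (a sum of increasing and decreasing functions can still be monotone). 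Your explicit counterexample --- a node $x$ with negligible traffic, already near $u$, with a channel cost $C+l_{ux}$ exceeding the vanishing marginal gain --- supplies exactly the missing instance that makes the claim rigorous, at the modest price of having to verify such an instance exists. So your version is slightly longer but strictly more complete than the paper's on that half.
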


\begin{proof}
We analyse each component of $\utility$ separately. 
First, we note that a direct application of \cite{ersoy2019profit} shows that $\revenue$ is monotone increasing. Next, we look at expected fees:
$$-\mathbb{E}[\text{fees encountered by } u_{S}] = -\lambda_u \sum_{v \in V; v \neq u} d(u,v) \cdot f_{avg}^T \cdot \transacts.$$
The monotonicity of this function directly follows from the fact that for any $S_1 \subseteq S_2$, $d_{S_1}(u,v) \geq d_{S_2}(u,v)$. Thus, the function is monotonically increasing. Note that in the calculations we assume that $\transacts$ is a fixed value.

Finally, $- \channelscost$ is clearly a monotonically decreasing function. Since two components of $\utility$ are monotonically increasing and one component is monotonically decreasing, $\utility$ is non-monotone. 
\end{proof}
The final property we show about our objective function is that it is not necessarily non-negative.
\begin{theorem}
The expected utility function $\utility$ is not necessarily non-negative.
\end{theorem}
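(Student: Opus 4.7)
The plan is to prove this by exhibiting a concrete small counterexample in which the utility is strictly negative. The key observation is that the revenue term $\revenue$ vanishes whenever the newly joining user $u$ ends up as a leaf in the augmented graph, because no shortest path between any other pair of nodes can traverse a degree-one vertex. Meanwhile, the channel-cost term $\channelscost$ is always strictly positive whenever $u$ opens at least one channel (it contains the on-chain fee $C$ and the opportunity cost $l_u$, both positive), and the encountered-fees term $\fees$ is non-negative. Putting these three facts together immediately yields a negative utility.

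Concretely, I would take a pre-existing PCN $G$ (any connected graph suffices; even a single pre-existing edge between two nodes $v_1,v_2$ works) and let $u$ join by playing the minimal strategy $S=\{(v_1,l)\}$ for some $0<l\leq B_u-C$. In the augmented graph $G\cup S$, the node $u$ has out-degree one and in-degree one, so for every pair $(s,r)$ with $s,r\neq u$ we have $m_{(u,\cdot)}(s,r)=m_{(\cdot,u)}(s,r)=0$. By \eqref{eq:pe} this gives $\lambda_e=0$ on each edge incident to $u$, hence by \eqref{eq:rev} we obtain $\revenue=0$. At the same time $\channelscost=L_u(v_1,l)=C+l>0$, and $\fees\geq 0$ (it is a sum of non-negative summands, with all distances finite since $u$ is now connected to the rest of the graph). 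Therefore
\begin{equation*}
\utility \;=\; 0 \;-\; \fees \;-\; (C+l) \;<\; 0.
\end{equation*}

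I do not anticipate any real obstacle: the only thing to be slightly careful about is showing that the example is well-formed under our assumptions, namely that the budget constraint $C+l\leq B_u$ can be satisfied (we can simply assume $B_u$ is large enough to allow at least one channel, since otherwise $u$ would be disconnected and by convention $\utility=-\infty$, which is already negative and proves the claim trivially) and that $\revenue$ really is zero at a leaf, which follows directly from the betweenness-based definition of $\lambda_e$. If one prefers a parameterized argument rather than a single example, the same construction can be pushed harder by taking the on-chain fee $C$ arbitrarily large while keeping the transaction distribution fixed, making the negativity of $\utility$ unbounded.
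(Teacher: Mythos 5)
Your proof is correct, and it is in fact more rigorous than the paper's own argument. The paper disposes of this theorem with a single non-constructive observation --- that $\channelscost+\fees$ ``might easily get bigger'' than $\revenue$ for some strategy --- without exhibiting any witness. You instead construct an explicit witness: $u$ joining as a leaf via a single channel, where $\revenue=0$ because no shortest path between two \emph{other} nodes traverses a degree-one vertex, while $\channelscost = C+l > 0$ and $\fees \ge 0$. This is consistent with how the paper itself later computes $\revenue = 0$ for leaf nodes in the star-graph equilibrium analysis, so your reading of the revenue term (summing only over pairs $v_1,v_2 \in V\setminus\{u\}$, as in the restatement at the start of Section~IV) is the intended one. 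Your fallback of letting $C$ grow large also covers any quibble about whether $u$'s own outgoing transactions contribute to $\lambda_{uv_1}$ under the literal form of Equation~(2). In short: same underlying idea (costs dominate revenue), but your version actually proves it, which the paper's one-liner arguably does not.
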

\begin{proof}
This follows from the observation that the sum of the cost of creating channels and the expected fees $\channelscost+\fees$ might easily get bigger than the expected revenue $\revenue$ when choosing some strategy $S \subseteq \Omega$. 
\end{proof}
\subsection{Fixed amounts of funds per channel}\label{sec:greedy}
We first show that if we restrict the amount of funds (say $l_1$) that the new user $u$ can lock in each created channel, we can achieve an approximation ratio of $1-\frac{1}{e}$. This setting is useful for users who want to minimize their computational cost.
The algorithm (described in Algorithm~\ref{alg:greedy}) that achieves this ratio in this setting is simple -- we greedily pick the $k$ best channels to connect with that maximize the expected revenue minus the expected fees.
Formally, let us define a simplified utility function $\utility'$ which is the sum of the expected revenue and the expected fees: $\utility' = \revenue + (- \fees)$.

We note that the simplified utility function $\utility'$ is submodular and monotone, as shown in~\ref{properties}.
Let us denote the maximum number of channels that can be created given $u$'s budget $B_u$ by $\maxsize := \lfloor\frac{B_u}{C+l_1}\rfloor$.
We can now maximize $\utility'$ and find the optimal set of vertices to connect to for each possible subset of vertices of size $k$. 
We do this for $k \in \{1,2, \ldots, \maxsize \}$, and then compare the results for all $k$.
Since the channel creation cost is now fixed for any choice of $k$ new channels, the $(1-\frac{1}{e})$-approximation we achieve when we greedily maximize $\utility'$ simply follows from the result in~\cite{greedyalg} since $\utility'$ is submodular and monotone. 

The next theorem shows that Algorithm~\ref{alg:greedy} returns a $(1 - \frac{1}{e})$-approximation and runs in time linear in \maxsize.
\begin{theorem}
Algorithm~\ref{alg:greedy} with inputs $\Omega = \{(v,l_1) \in V: v \neq u\}$ and $\maxsize$ returns a $(1 - \frac{1}{e})$-approximation of the optimum of $\utility'$. The result is computed in at most $\mathcal{O}(\maxsize \cdot n)$ number of estimations of the $\lambda_{uv}$ parameter.
\end{theorem}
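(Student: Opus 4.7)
The plan is to reduce this to the classical Nemhauser--Wolsey--Fisher bound on the greedy algorithm for cardinality-constrained monotone submodular maximization~\cite{greedyalg}. The two ingredients are already in hand: in Section~\ref{properties} we showed that the simplified utility $\utility'=\revenue-\fees$ is submodular (the revenue part was shown directly and the negated fees part was shown via the marginal-contribution argument on shortest-path distances), and in the second theorem of that section we showed $\utility'$ is monotonically non-decreasing. Hence, over any fixed ground set and any cardinality constraint $k$, the standard greedy algorithm that repeatedly picks the element of maximum marginal gain returns a solution of value at least $(1-1/e)$ times the optimum.

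What remains is to convert the budget constraint into a cardinality constraint. Since every action $(v,l_1)\in\Omega$ carries the same total cost $C+l_1$, any feasible strategy $S$ satisfies $|S|\cdot(C+l_1)\le B_u$, i.e.\ $|S|\le\maxsize$, and conversely any set of size at most $\maxsize$ is feasible. So the budgeted problem is exactly the cardinality-constrained problem with $k=\maxsize$. The algorithm picks the best of the greedy outputs for each $k\in\{1,\dots,\maxsize\}$; in particular it does at least as well as greedy run to size $\maxsize$, which inherits the $(1-1/e)$ guarantee against the optimum of size at most $\maxsize$, and hence against the true optimum of the constrained problem.

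For the runtime bound, I would observe that the greedy loop performs at most $\maxsize$ iterations, and in each iteration it needs the marginal gain $\utility'(S\cup\{(v,l_1)\})-\utility'(S)$ for each candidate $v\in V\setminus(\{u\}\cup S^V)$. The revenue contribution of adding $(v,l_1)$ only requires knowledge of $\lambda_{uv}$, and the change in expected fees depends on shortest-path distances that are precomputed once per iteration (and therefore are not counted as $\lambda_{uv}$ estimations). Thus each iteration costs at most $n$ new estimations of the $\lambda_{uv}$ parameter, giving a total of $\mathcal{O}(\maxsize\cdot n)$ estimations.

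The only mildly delicate step is making sure that the optimum of the original budgeted problem coincides with the optimum of the cardinality-constrained problem at $k=\maxsize$; this is immediate from the uniform per-channel cost $C+l_1$, but is the place where the assumption of a fixed locked amount $l_1$ is actually used. Everything else is a direct appeal to monotonicity, submodularity, and the classical greedy analysis, so this equivalence is the main (albeit small) conceptual point to verify.
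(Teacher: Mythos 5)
Your proposal is correct and follows essentially the same route as the paper: both reduce the problem to cardinality-constrained maximization of the monotone submodular function $\utility'$ and invoke the classical greedy guarantee of~\cite{greedyalg}, with the runtime counted as at most $n$ estimations of $\lambda_{uv}$ per each of the $\maxsize$ greedy iterations. Your explicit justification that the uniform per-channel cost $C+l_1$ makes the budget constraint equivalent to the cardinality constraint $|S|\le\maxsize$ is a detail the paper leaves implicit, but it is the same argument.
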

\begin{proof}
To see that Algorithm~\ref{alg:greedy} returns a $(1 - \frac{1}{e})$-approximation of the optimum of $\utility'$, we need to see that in the algorithm for each possible $k$ we compute a $(1 - \frac{1}{e})$-approximation of $\mathcal{U'}$ (in $\mathcal{O}(n)$ time), because the function $\mathcal{U'}$ is submodular and monotonically increasing, then the overall solution that compares partial results gives a $(1-\frac{1}{e})$-approximation ratio for a fixed $k$. This in turn gives a $(1-\frac{1}{e})$-approximation ratio for each $k \in \{1,2, \ldots, \maxsize \}$.
\end{proof}

\begin{algorithm}[t!]
\SetAlgoLined
\DontPrintSemicolon
\caption{Greedy algorithm}\label{alg:greedy}
\KwData{$\Omega,\maxsize$}
 $P_S \gets \text{array indexed } 1, \ldots, \maxsize \text{ initialized with } P[i] = \emptyset$\;
 $P_U \gets \text{array indexed } 1, \ldots, \maxsize \text{ initialized with } P[i] = -\infty$\;
 $S \gets \emptyset$\;
 $A \gets \Omega$\;
\While{$|S| \leq  \maxsize$} {
    $X \gets argmax_{X\in A} [\mathcal{U'}_{u_{S \cup \{X\}}} - \mathcal{U'}_{u_{S}}]$\;
    $S \gets S \cup \{X\}$ \;
    $P_S[|S|] \gets S$\;
    $P_U[|S|] \gets \mathcal{U'}_{u_{S}}$\;
    $A \gets A \setminus \{X\}$\;
    }
$i \gets argmax_{i \in \{1, \ldots, M\}}[P_U[i]]$\;

\Return $P_S[i]$
\end{algorithm}

\subsection{Varying amount of funds per channel, discrete version}
Next, we give the new user a choice of locking varying amounts of capital in each channel. 
Enabling varying capital on channels depicts more accurately the realistic model of transaction distribution we leverage.
However, in order to achieve the same approximation ratio of $1-\frac{1}{e}$ as in the previous setting, we have to discretize the capital that can be locked into a channel to some minimal amount $m >0$.
That is, opening a channel would require injecting funds of the form $km$ for some $k \in \mathbb{N}$.
We impose this discretization constraint in order to perform an exhaustive search over all possible assignments of the budget $B_u$ to the capital in each channel.

We again operate on the modified utility function $\utility'$
and present an algorithm (described in Algorithm~\ref{alg:exhausitve}) that achieves the same approximation ratio of $1-\frac{1}{e}$.
In more detail, given a parameter $m$, Algorithm~\ref{alg:exhausitve} firstly divides the budget $B_u$ to $\frac{B_u}{m} $ units that can be spent. 
Then, the algorithm divides these units into $k+1$ parts (where $k = \lfloor \frac{B_u}{C} \rfloor $ is a bound on the number of channels that $u$ can possibly create).
Finally, for each possible division, it runs Algorithm~\ref{alg:greedy} (again by temporarily skipping the channel costs) in each step locking the capital assigned to this channel in the division. 
Let us denote $T := \binom{\frac{B_u}{m}}{\frac{B_u}{C}+1}$. 
\begin{algorithm}[t!]
\SetAlgoLined
\DontPrintSemicolon
\caption{Exhaustive search over channel funds}\label{alg:exhausitve}
\KwData{$V, B_u, m$}
 $k = \lfloor \frac{B_u}{C} \rfloor$\;
 $D = \text{array of all divisions of }[\lfloor\frac{B_u}{m}\rfloor] \text{ to } k+1 \text{ parts}$\;
 $D_S \gets \text{array indexed } 1, \ldots, |D| \text{ initialized with } D_S[i] = \emptyset$\;
\For{$i \in [|D|]$} {
$(l_1, \ldots, l_{k+1}) \gets D[i]$\; 
    $D_S[i] \gets $ the output of Algorithm~\ref{alg:greedy} run on $M = k$ with a restriction that in every step $j$ of \emph{while} loop in the algorithm a channel of capacity $l_j$ is selected\;
    }
$i \gets argmax_{i \in \{1, \ldots, |D|\}}\mathcal{U'}_{u_{D_S[i]}}$\;
\Return $D_S[i]$
\end{algorithm}

\begin{theorem}
Algorithm~\ref{alg:exhausitve} with inputs $V$, budget $B_u$, and parameter $m$ returns a $(1 - \frac{1}{e})$-approximation of the optimum of $\utility'$. The result is computed in at most $\mathcal{O}(T \cdot \frac{B_u}{C} \cdot n)$ steps.
\end{theorem}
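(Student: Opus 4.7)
The plan is to reduce the approximation analysis to the guarantee already proven for Algorithm~\ref{alg:greedy}, by showing that the exhaustive enumeration over $D$ is forced to visit a division matching the capital profile of some optimum strategy, and then to count operations for the runtime.

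First I would characterize the feasible set. Any feasible strategy contains at most $k = \lfloor B_u/C\rfloor$ channels (each carrying a cost of at least $C$), and the discretization requires each channel's locked capital to be a positive multiple of $m$. Hence the vector of capital amounts used by an optimum strategy $S^*$, padded with zero-valued unused slots up to length $k+1$, coincides with one of the divisions enumerated in $D$; call this index $i^*$ and write $D[i^*]=(l_1^*,\dots,l_{k+1}^*)$. The restriction to this discretization costs nothing with respect to the notion of ``optimum'' we compare to, because the objective itself is defined over strategies drawn from the discretized action space $\Omega$.

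Next I would argue that the inner invocation of Algorithm~\ref{alg:greedy} restricted to the profile $D[i^*]$ returns a $(1-\tfrac{1}{e})$-approximation of $\max_S\{\utility'(S) : S \text{ has profile } D[i^*]\}$. The only free variable at step $j$ of the while loop is the partner vertex, and $\utility'$ restricted to these choices remains monotone and submodular by Section~\ref{properties}; the classical Nemhauser--Wolsey--Fisher greedy guarantee therefore applies exactly as in the proof of the previous theorem. Since Algorithm~\ref{alg:exhausitve} returns the argmax over all $i$, the returned strategy $D_S[\hat\imath]$ satisfies
\[
\utility'(D_S[\hat\imath]) \;\geq\; \utility'(D_S[i^*]) \;\geq\; \left(1-\tfrac{1}{e}\right)\utility'(S^*).
\]

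For the runtime, I would simply multiply: the outer loop runs $|D|=T$ times, each call to the restricted greedy performs at most $k+1 = \mathcal{O}(B_u/C)$ while-loop iterations, and each iteration compares the marginal value of $\utility'$ across up to $n$ candidate partner vertices. This gives the claimed $\mathcal{O}(T \cdot B_u/C \cdot n)$ bound.

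The main obstacle I anticipate is justifying that pinning capacity $l_j$ at step $j$ really reduces to the cardinality-constrained greedy setting covered by the previous theorem, since a priori forcing one element per capacity slot looks like a partition-matroid-style constraint rather than a pure cardinality constraint. I would handle this by noting that because $D$ enumerates every ordered division of $B_u/m$ into $k+1$ parts, the iteration at index $i^*$ sees the capacities of $S^*$ in an order under which the step-by-step choice of partner vertex — operating on a monotone submodular $\utility'$ — is structurally identical to the uniform-capacity subproblem analyzed for Algorithm~\ref{alg:greedy}, so the $(1-\tfrac{1}{e})$ ratio transfers verbatim.
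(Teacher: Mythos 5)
Your proposal takes essentially the same route as the paper's proof: enumerate all divisions of the discretized budget, invoke the $(1-\frac{1}{e})$ greedy guarantee within each fixed capacity profile, note that the final argmax preserves the ratio for the profile matching an optimum, and multiply $|D| \cdot (k+1) \cdot n$ for the runtime. The partition-matroid-style concern you flag at the end is a genuine subtlety, but the paper's own two-sentence proof glosses over it in exactly the same way, so your treatment is if anything more explicit and more careful than the original.
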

\begin{proof}
The budget $\frac{B_u}{m}$ can be split to at most $k = \lfloor \frac{B_u}{C} \rfloor$ parts in at most $\binom{\frac{B_u}{m}}{k+1}$ cases. 
Algorithm~\ref{alg:greedy} is run as a subroutine of Algorithm~\ref{alg:exhausitve}. The main routine iterates through all possible combinations of amounts locked to channels, each of them giving the $(1-\frac{1}{e})$-approximation for the selected assignments of funds.
\end{proof}

We note that there is a trade-off between the choice of $m$ and the run time of Algorithm~\ref{alg:exhausitve}: a larger $m$ would reduce the search space and hence the runtime of the algorithm. However, it would reduce the control over the capital the user could lock into any particular channel.

\subsection{Varying amount of funds per channel, continuous version}
In this section, we remove the previous discrete constraint on the capital the new user $u$ can inject into the channel, that is, $u$ can now inject funds of the form $m \in \mathbb{R}+$ into any channel.
We sketch a polynomial-time $\frac{1}{5}$-approximation algorithm for the optimisation problem:
let us first denote the total expected on-chain transaction cost for a user $u$ with an average output stream of $N_u$ transactions as $C_u := \frac{N_u\cdot C}{2}$. 
That is, $C_u$ represents the total expected cost for user $u$ when $u$ transacts entirely on the blockchain.
One can now consider what we term the benefit function, which is simply the sum of $C_u$ and the utility of $u$ when $u$ joins the network with strategy $S$. Formally, we denote this function by $\utility^{b} := C_u + \utility$.
Intuitively, the benefit function captures the potential benefit $u$ would gain from transacting with other users over the PCN rather than on the blockchain. 

We observe that $\utility^{b}$ will stay submodular and positive whenever the user chooses channels $(u,v)$, such that $$\fees + \frac{B_u}{C} \cdot L_u(v,l) < C_{u}.$$ 
As such, we can apply the algorithm and result of Lee et al.~\cite{LeeMNS09} for optimising submodular and non-negative functions to $\utility^{b}$ to achieve a $\frac{1}{5}$-approximation of $\utility^{b}$.

\section{Structural properties of simple graph topologies}
\label{sec:nash}
In this section, we complement our study of optimisation algorithms for users in the payment channel network (\Cref{sec:optimization}) with a study of structural properties of simple graph topologies given the transaction model between users as defined in Section~\ref{subsection:model_params}.
We are particularly interested in properties of stable networks, that is, networks that are in a Nash Equilibrium where no user can increase their utility by any unilateral change in strategy.
Stability is an important notion in the context of PCNs as this has implications not only on the choice of which nodes to connect to for a new user~\cite{avarikioti2020ride} but also on payment routing and finding off-chain rebalancing cycles for existing users to replenish depleted channels~\cite{AvarikiotiPSSTY22}.
We are also interested in the parameter space of our model under which specific graph topologies form a Nash Equilibrium.  

We use the following assumptions and notations in our analysis in this section:
\begin{enumerate}
    \item Recall from Equations~\ref{eq:pe}~and~\ref{eq:rev} in~\Cref{sec:preliminaries} that the expected revenue of a user $u$ can be written as:
    $
        \revenue = \sum_{v_i \in Ne(u)} \lambda_{ u v_i} \cdot f_{avg} =
        \sum_{\substack{ v_1 \neq v_2\\ v_1,v_2 \in V \setminus \{u\}}} \frac{m_u(v_1,v_2)}{m(v_1,v_2)} \cdot N_{v_1} \cdot p^{trans}_{v_1,v_2} \cdot f_{avg}$
    
    We denote $b := N_{v_1} \cdot f_{avg}$ and assume it is constant for $v_1 \in V \setminus \{u\}$.
    \item We denote $a := N_u \cdot f_{avg}^T$.
    \item For any $s>0$ and $n \in \mathbb{N^*}$, we denote $H_n^s := \sum_{k=1}^n \frac{1}{k^s}$.
    \item All the players create channels of equal cost $l$.
\end{enumerate}

\subsection{Upper bound on the longest shortest path containing a hub}
An interesting question is how large is the diameter of stable networks with highly connected nodes. 
In the context of PCNs, this has implications on efficient payment routing algorithms\cite{RoosMKG18,SivaramanVRNYMF20,PietrzakS0Y21,landmark}.
As a first step to answering this question, we derive an upper bound on the longest shortest path in a stable network that contains a hub node, i.e., an extremely well-connected node that transacts a lot with other nodes in the network.
Let us select a hub node $h$ and consider the longest shortest path that $h$ lies on (if there are multiple we simply select one of them arbitrarily).
We denote the length of the path by $d$.
The following theorem derives an upper bound on $d$ for a stable network.

\begin{theorem}
$d$ is upper bounded by $2(\frac{\frac{C+\epsilon}{2} - \lambda_e \cdot f}{p_{\min} \cdot N \cdot f}) +1$.
\end{theorem}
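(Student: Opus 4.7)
The plan is to use the Nash equilibrium assumption to rule out long paths: specifically, to pick a node on the longest shortest path through $h$ and show that if $d$ exceeded the stated bound, this node could unilaterally add a shortcut channel that strictly improves its utility, contradicting stability. Let $P = (v_0, v_1, \ldots, v_d)$ be the longest shortest path passing through $h$, and write $h = v_k$. Since $k + (d-k) = d$, at least one sub-path has length $\geq \lceil d/2 \rceil$; without loss of generality assume $d(v_0, h) = k \geq \lceil d/2 \rceil \geq (d+1)/2 - O(1)$, so $v_0$ is the endpoint lying ``far'' from the hub.

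I would then consider the deviation in which $v_0$ opens a single direct channel to $h$, creating a length-$1$ shortcut. Decompose the resulting utility change $\Delta \mathcal{U}_{v_0}$ into the three pieces used throughout the paper: (i) the channel cost, which contributes $\frac{C+\epsilon}{2}$ (matching the statement's parameterisation of on-chain plus opportunity cost borne by $v_0$); (ii) the revenue gain from the new incident edge, which is bounded above by $\lambda_e \cdot f$ using the estimated edge transaction rate; and (iii) the fee savings on $v_0$'s own outgoing transactions. For the third piece, I would restrict attention to just the transactions from $v_0$ to $h$: in the current network these have length $k \geq (d+1)/2$, and after adding the shortcut they have length $1$, so each such transaction saves $k-1 \geq (d-1)/2$ hops, each at cost $f$. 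Since $p^{trans}_{v_0, h} \geq p_{\min}$ and the relevant transaction rate is at least $p_{\min} \cdot N$, the fee savings from the deviation are at least $p_{\min} \cdot N \cdot \frac{d-1}{2} \cdot f$.

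Applying the Nash equilibrium condition $\Delta \mathcal{U}_{v_0} \leq 0$ (i.e., total gains do not exceed total deviation cost) yields
\[
\lambda_e \cdot f + p_{\min} \cdot N \cdot \frac{d-1}{2} \cdot f \;\leq\; \frac{C+\epsilon}{2},
\]
and rearranging gives the claimed bound $d \leq 2\bigl(\frac{(C+\epsilon)/2 - \lambda_e f}{p_{\min} \cdot N \cdot f}\bigr) + 1$.

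The main obstacle will be the lower bound on fee savings, for two reasons. First, when $v_0$ adds the shortcut, shortest paths elsewhere in the network may shift; I would sidestep this by using only $v_0$'s savings on the single pair $(v_0, h)$, where the reduction is unambiguous because one hop is a trivial lower bound on any path length. Second, one must justify that the endpoint $v_0$ is an admissible unilateral deviator (i.e., has enough budget and is permitted to open this channel in the model); since the section's assumptions fix a uniform channel cost $l$ and every player is assumed symmetric, this is automatic. A secondary subtlety is the use of $\lambda_e$ as an upper bound for the revenue gain on the new edge—this must reference the estimated transaction rate of the hypothetical edge in the post-deviation graph, for which the author's standing assumption that users can estimate $\lambda$ from the known topology and transaction distribution (Section~\ref{subsec:transactions}) suffices.
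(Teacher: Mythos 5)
Your proof is structurally sound and reaches essentially the paper's inequality, but via a genuinely different deviation. The paper does not use the hub or the path endpoint at all: it inserts a length-$2$ bypass at the \emph{middle} of $P$, adding an edge between $v_{\lfloor d/2 \rfloor -1}$ and $v_{\lfloor d/2 \rfloor +1}$, and accumulates the fee savings across the $\approx d/2$ \emph{destinations} in the far half of the path, each saving one intermediary. You instead add a single long chord from the far endpoint $v_0$ to the hub $h$ and accumulate the savings on one destination across $\approx d/2$ \emph{hops}. Both routes lower-bound the deviator's gain by $\lambda_e f + N p_{\min} f \cdot \Theta(d)$ and invoke stability against the shared cost $\frac{C+\epsilon}{2}$. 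Your variant has one genuine advantage: the relevant pair probability is $p^{trans}_{v_0,h}$, i.e., the probability of transacting with the hub itself, which under the Zipf model is large by construction, whereas the paper's $p_{\min}$ ranges over all crossing pairs on $P$ and needs the post-hoc remark that it is ``fairly large'' when the hub is not exactly at the midpoint.

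Two slips to fix. First, you state the revenue gain is ``bounded \emph{above}'' by $\lambda_e f$; to pass from the stability condition $\text{gain} \le \frac{C+\epsilon}{2}$ to the displayed inequality you need a \emph{lower} bound on the revenue gain, which is exactly why the paper defines $\lambda_e$ as the minimum of the rates in the two directions of the new edge. Second, there is an off-by-one: a shortest path of length $k$ has $k-1$ intermediaries, and $k \ge \lceil d/2 \rceil$ only gives $k-1 \ge d/2 - 1$ when $d$ is even, not $(d-1)/2$; as written your argument yields $d \le 2\bigl(\frac{(C+\epsilon)/2 - \lambda_e f}{p_{\min} N f}\bigr) + 2$. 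This is easily repaired (e.g., also count the savings on transactions from $v_0$ to $v_{k+1}$, or adopt the paper's midpoint bypass), but as stated it misses the claimed constant by $1$. A final, smaller point: in this model channel creation is bilateral with the cost split, so ``$v_0$ unilaterally deviates'' should be phrased as ``not both $v_0$ and $h$ profit''; your savings argument applies symmetrically to $h$ (transactions from $h$ to $v_0$), so this is cosmetic rather than substantive.
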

\begin{proof}
Let $P = (v_0, v_1, \dots, v_d)$ be the path. 
Consider the addition of an edge $e$ between $v_{\lfloor \frac{d}{2} \rfloor -1}$ and $v_{\lfloor \frac{d}{2} \rfloor +1}$. 
Denote by $\lambda_e$ the minimum rate of transactions going through the edge $e$ in both directions, i.e. 
$\lambda_e := \min\{\lambda_{(v_{\lfloor \frac{d}{2} \rfloor -1}, v_{\lfloor \frac{d}{2} \rfloor +1})}, \lambda_{(v_{\lfloor \frac{d}{2} \rfloor +1}, v_{\lfloor \frac{d}{2} \rfloor -1})}\}$.

Now consider the set of directed shortest paths $S$ such that each path $s_i \in S$ is a sub sequence of $P$ and one end point of $s_i$ lies in $\{v_0, \dots, v_{\lfloor \frac{d}{2} \rfloor -1}\}$ and the other end point of $s_i$ lies in $\{v_{\lfloor \frac{d}{2} \rfloor +1}, \dots, v_d\}$. Let $p_{i}$ be the probability that $s_i$ is selected, with probabilities of directed paths being selected as defined by the probability of the source of the path transacting with the sink (refer to \ref{eq:prob} for more details). Let $p_{\min} := \min_i p_i$.

We know the cost (split equally) of creating the edge $e$ is at least $\frac{C+\epsilon}{2}$. Since the network is stable, this implies that the cost of creating $e$ is larger than any benefits gained by the $2$ users $v_{\lfloor \frac{d}{2} \rfloor -1}$ and $v_{\lfloor \frac{d}{2} \rfloor +1}$ by creating $e$. That is,

\begin{equation}\label{eq:length}
    \frac{C+\epsilon}{2} \ge \lambda_e \cdot f + N \cdot p_{\min} \cdot f \cdot \lfloor\frac{d}{2} \rfloor,
\end{equation}
\noindent where the first term on the RHS of the inequality is the minimum (among the two parties $v_{\lfloor \frac{d}{2} \rfloor -1}$ and $v_{\lfloor \frac{d}{2} \rfloor +1}$) of the average revenue gained by adding the edge $e$. 
The second term on the RHS of the inequality is a lower bound on the average amount of fees saved by $v_{\lfloor \frac{d}{2} \rfloor -1}$ and $v_{\lfloor \frac{d}{2} \rfloor +1}$.
Rearranging, this implies that $d \le 2(\dfrac{\dfrac{C+\epsilon}{2} - \lambda_e \cdot f}{p_{\min} \cdot N \cdot f}) +1.$
\end{proof}

Note that since a hub node is on the path, as long as it is not directly in the middle of the path (i.e. vertex $v_{\lfloor \frac{d}{2} \rfloor}$), $p_{\min}$ should be fairly large as hubs are typically high degree vertices. Moreover,
if a hub node is on a diametral path, we extract a meaningful bound on the diameter of a stable network.

\subsection{Stability of simple graph topologies}
In this section, we study some simple graph topologies, and the parameter spaces of the underlying transaction distribution under which they form a Nash Equilibrium.
We restrict our analysis to these simple topologies because computing Nash Equilibria for a general graph using best response dynamics is NP-hard (see Theorem $2$ in~\cite{avarikioti2020ride}).
As mentioned in Section~\ref{subsec:transactions}, we assume the underlying transaction distribution that gives the probability of any two nodes transacting with each other is the Zipf distribution.

We firstly show that when the scale parameter $s$ of the Zipf distribution is large (i.e. the distribution is heavily biased towards transacting with only high-degree nodes), the star graph is a Nash Equilibrium.

\begin{theorem}
The star graph with the number of leaves $\geq 4$ is a Nash Equilibrium when nodes transact with each other according to the Zipf distribution with parameter $s$ such that $\frac{1}{2^s}$ is negligible, i.e. $\frac{1}{2^s} \approx 0$.
\end{theorem}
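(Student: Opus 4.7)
The plan is to verify the Nash condition by enumerating every unilateral deviation available to the center $c$ and to any leaf, and showing each one is either catastrophic (yielding $-\infty$ via disconnection) or strictly dominated, because the positive channel cost $C+l$ outweighs every benefit, which scales as $O(1/2^s)\approx 0$. I first instantiate the modified Zipf rule of Section~\ref{subsection:model_params} on the baseline star with $k\ge 4$ leaves $v_1,\dots,v_k$: from any leaf $v$'s viewpoint $c$ is the unique rank-$1$ node and every other leaf is tied at degree $1$, giving $p^{trans}_{v,c}=1/H_k^s$ and $p^{trans}_{v,w}\le 1/2^s$ for non-center $w$; from $c$'s viewpoint all $k$ leaves are tied, so $p^{trans}_{c,v_i}=1/k$ for every $i$.

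I then rule out every \emph{removal}: if a leaf $v$ drops its unique edge to $c$ it is isolated, so $d(v,\cdot)=\infty$ appears in its fee term with non-zero probability and $\utility=-\infty$; if instead $c$ drops its edge to some $v_i$, then $v_i$ is isolated, $d(c,v_i)=\infty$ while $p^{trans}_{c,v_i}>0$ on the deviated graph, so $c$'s utility is $-\infty$. The only remaining case is a leaf $v$ \emph{adding} channels to a set $W\subseteq\{v_1,\dots,v_k\}\setminus\{v\}$ of $m\ge 1$ other leaves (the center already has every possible edge). This move costs $m(C+l)>0$, so it suffices to upper-bound its benefit, which splits into (i) fee savings on $v$'s own transactions — where the distances $d(v,w)$ for $w\in W$ drop from $2$ to $1$, contributing at most $N_v f_{avg}^T \sum_{w\in W}p^{trans}_{v,w}$ — and (ii) new intermediary revenue along the length-$2$ paths $w_i$–$v$–$w_j$ with $w_i,w_j\in W$, which tie with the existing $w_i$–$c$–$w_j$, so $v$ collects at most half of the corresponding routing fee. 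After recomputing ranks on the deviated graph, $c$ still has degree $k$ while $v$'s degree is $m+1\le k$, so $c$ retains (co-)maximum degree and every $p^{trans}_{v,w}$ and $p^{trans}_{w_i,w_j}$ remains bounded by $1/2^s$. Both contributions are therefore $O(1/2^s)\approx 0$, strictly smaller than $m(C+l)$. No other source/destination pair is rerouted through $v$, since any alternate path via $v$ from a node outside $W$ is strictly longer than the star's length-$\le 2$ path through $c$.

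Combining the two classes, no unilateral deviation strictly increases any player's utility, so the star is a Nash equilibrium. The step I expect to require the most care is the post-deviation rank recomputation in (i)--(ii): one must verify that even in the extreme case $m=k-1$, where $v$'s degree ties $c$'s, the averaging rule of Section~\ref{subsection:model_params} still assigns every non-hub node a rank factor bounded by $1/2^s$. This is precisely where the hypothesis $k\ge 4$ enters, to rule out small-star pathologies in which $v$'s deviation could create multiple hubs whose rank factors are not negligible.
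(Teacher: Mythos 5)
Your overall strategy matches the paper's: every benefit a deviating node can reap (intermediary revenue between non-central pairs, fee savings on transactions to non-central nodes) is weighted by a Zipf factor of at most $1/2^s\approx 0$, while channel costs are bounded away from zero, and your treatment of the pure-addition case is in fact more careful than the paper's (you correctly track the post-deviation degree ranking and the halving of path counts when $w_i$--$v$--$w_j$ ties with $w_i$--$c$--$w_j$).

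There is, however, a genuine gap in your case enumeration. You claim that after disposing of pure removals (which disconnect a node and give utility $-\infty$) ``the only remaining case'' is a leaf adding channels while keeping its edge to the center. That is false: a unilateral strategy change may simultaneously remove and add channels, and the deviation the paper actually spends its leaf analysis on is exactly this mixed one --- a leaf $v$ drops its edge to $c$ and opens channels to a nonempty set $W$ of other leaves. This deviation keeps $v$ connected, so it is not killed by your disconnection argument, and for $|W|=1$ it does not even increase the channel cost (the cost of the dropped channel offsets the cost of the new one), so it is not killed by your ``cost $m(C+l)>0$ versus benefit $O(1/2^s)$'' argument either. The case still fails, but for a different reason, which you must supply: after the deviation $c$ retains the (co-)maximum degree, so $v$'s dominant transaction partner under the Zipf weighting is still $c$ with $p^{trans}_{v,c}$ bounded away from $0$, and $d(v,c)$ has increased from $1$ to at least $2$; hence $\fees$ strictly rises while the revenue gain remains $O(1/2^s)$ and the channel cost does not decrease. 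Adding this case (and, symmetrically, noting that the center gains nothing by trading edges since it is already adjacent to everyone) closes the argument; without it the proof of the Nash property is incomplete.
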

\begin{proof}

First note that, because $\frac{1}{2^s}$ is negligible, then all leaf nodes have negligible expected revenue. 
Now consider a leaf node $u$. The costs of the leaf node $u$ are triggered by transacting with the central node. If $u$ removes the edge between $u$ and the central node, and replaces this connection with a set of edges to other leaf nodes, $\fees$ can only rise, as the central node still remains the one with the highest degree. 

The central node may want to delete all of its edges, but this will result only in lowering its $\revenue$. The $\fees$ may not go down, because the central node already communicates directly with all leaf nodes. 
\end{proof}

Secondly, we establish the necessary conditions that make the star graph a Nash Equilibrium in general.

\begin{restatable}[]{theorem}{starnash}
\label{thm:star_nash}
The star graph with the number of leaves $n \geq 2$ is a Nash Equilibrium when nodes transact with each other according to the Zipf distribution with parameter $s\geq 0$ whenever the following conditions hold:
\begin{enumerate}
    \item $a/H_n^s \leq 2^s \cdot l \cdot 1$,
    \item $b \cdot \frac{i}{2} \cdot \frac{H_{i+1}^s-1-1/2^s}{H_n^s} + a \cdot \frac{H_{i+1}^s-1}{H_n^s} \leq l \cdot (i)$ (for $2 \leq i \leq n-1$),
    \item $b \cdot \frac{i}{2} \cdot \frac{H_{n}^s-1-1/2^s}{H_n^s} + a \frac{H_{i+1}^s-2}{H_n} \leq l \cdot (i-1)$ (for $2 \leq i \leq n-1$).
\end{enumerate}
\end{restatable}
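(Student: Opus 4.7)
The plan is to verify that no player strictly gains by unilaterally deviating from the star. The two player roles are the hub $c$ and the $n$ leaves, so I would split the argument into a short check for the hub and a case analysis for a generic leaf, with each leaf case producing one of the three inequalities in the statement.

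For the hub, any additional channel is either impossible (it is already connected to every leaf) or a duplicate edge that only adds the cost $l$ without altering any shortest path, revenue, or fees; while closing any channel leaves the corresponding leaf with no neighbors, so the convention $d(\cdot,\cdot)=+\infty$ sends the hub's own expected fees to $+\infty$ and its utility to $-\infty$. Hence no hub deviation is profitable, independently of the parameters.

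Next, fix a leaf $u$ and enumerate its feasible deviations. They split into family (I), where $u$ keeps the channel to $c$ and opens $i\in\{1,\ldots,n-1\}$ additional channels to $i$ other leaves, and family (II), where $u$ closes the channel to $c$ and instead opens $i\ge 2$ channels to $i$ other leaves. The border cases---closing the only edge, which isolates $u$ and yields utility $-\infty$, and family (II) with $i=1$, in which $u$'s distance to $c$ and to every remaining leaf grows by one---are strictly worse without any parameter assumption. For each of the remaining deviations I would compute the change in $u$'s utility as the sum of three contributions: (a) a revenue term, equal to $b/2$ times the total modified-Zipf probability that an ordered connected-leaf pair $(v_j,v_k)$ is routed through $u$, where the factor $1/2$ arises from edge betweenness because after the deviation the two-hop paths $v_j\to u\to v_k$ and $v_j\to c\to v_k$ tie in length; (b) a fees-saved term, equal to $a$ times $\sum_v[d_{\mathrm{before}}(u,v)-d_{\mathrm{after}}(u,v)]\,\transacts$ evaluated with the modified-Zipf probabilities of the new degree sequence; and (c) a cost term equal to $li$ in family (I) and $l(i-1)$ in family (II). Rearranging ``gain $\le$ cost'' in each case yields condition~1 (family (I) with $i=1$), condition~2 (family (I) with $i\ge 2$), and condition~3 (family (II) with $i\ge 2$).

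The main technical obstacle will be the rank-factor bookkeeping. After each deviation the degree sequence changes: in family (I) the $i$ newly connected leaves jump from degree $1$ to $2$ and must be averaged together in the rank factor $rf(\cdot)$, while in family (II) the hub drops to degree $n-1$ and may tie with $u$ (when $i=n-1$) or simply shift the averaging window of the other vertices. This is why the numerators in conditions~2 and~3 involve $H_{i+1}^s-1-1/2^s$ and $H_n^s-1-1/2^s$ respectively, and why the saved-fees term in condition~3 has the different form $a(H_{i+1}^s-2)/H_n^s$. Once the modified rank factors are recomputed consistently both from $u$'s viewpoint (needed for the fees term) and from a connected leaf's viewpoint (needed for the revenue term) of every routed transaction, each of the three inequalities follows by direct algebraic simplification, and combined with the hub argument this establishes the Nash property of the star.
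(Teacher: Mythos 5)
Your proposal follows essentially the same route as the paper's proof: a short argument that the hub cannot profitably deviate (extra channels only add cost, removing one disconnects a party and forces utility $-\infty$), followed by an enumeration of leaf deviations split by whether the channel to the hub is kept, with condition~1 arising from adding a single leaf channel, condition~2 from adding $i\ge 2$ leaf channels, and condition~3 from replacing the hub channel by $i\ge 2$ leaf channels --- exactly the six cases the paper works through, including the $1/2$ betweenness factor from tied two-hop paths and the rank-factor averaging that produces the $H_{i+1}^s-1-1/2^s$ numerators. The only (harmless) difference is that you explicitly dispatch the degenerate deviation of swapping the hub edge for a single leaf edge, which the paper leaves implicit.
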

\begin{proof}

    Firstly, we prove that the \emph{central node} is in Nash Equilibrium in the star graph. Since the central node is connected to all other nodes, adding an additional channel to any node just increases the channel creation cost and thus decreases the utility for the central node. Removing a single edge disconnects the central node from a user and thus leads to infinite cost. Thus the central node has no incentive to switch to a different strategy. 
    
    Secondly, we prove when any \emph{leaf node} is also in Nash Equilibrium in the star graph.
    For every strategy defined below, we calculate expected revenue $\revenue$, expected costs $\fees$, and channels cost $L$ of the node $u$ after changes.

    \begin{environment}\leavevmode
    -- By default a leaf node $u$ will not add/remove any edges. \begin{itemize}
            \item $\revenue = 0$ 
            \item She interacts with 1 central node with $rf = 1$, and $n-1$ leaf nodes with\\ $rf = \frac{H_n^s-1}{n-1}$. $\sum_{v' \in V \setminus \{u\} = 1 + (n-1) \cdot \frac{H_n^s-1}{n-1}} = H_n^s$, \\
            $\fees = - a (n-1) \frac{\frac{H_n^s-1}{n-1}}{H_n^s} = -a \cdot \frac{H_{n}^s-1}{H_n^s}$, $L = - l \cdot 1$.
        \end{itemize}
    -- A leaf node may also try to add connections to $n-1$ other leaf nodes.
        \begin{itemize}
            \item the other leaf nodes $v'$ interact directly, with $2$ nodes (the central node, and the nodes that changes its strategy, both connected to $n-1$ other nodes) with $rf = \frac{1 + 1/2^s}{2}$, and indirectly with $n-2$ other nodes with $rf = \frac{H_n^s-1-1/2^s}{n-2}$. $\sum_{v'' \in V \setminus \{v'\}} rf(v'') = H_n^s$ \\
            $\revenue = b \cdot [2 \cdot 1/2]\binom{n-1}{2}\frac{\frac{H_n^s-1-1/2^s}{n-2}}{ H_n^s } = b \cdot \frac{n-1}{2} \cdot \frac{H_n^s-1 - 1/2^s}{H_n^s}$
            \item $\fees = 0$, $L = - l \cdot n $.
        \end{itemize}
    -- The leaf node may also add connections to $n-1$ leaf nodes and remove the connection with the central node.
        \begin{itemize}
            \item the other leaf nodes $v'$ interact directly, with $2$ nodes (the central node, and the nodes that changes its strategy, both connected to $n-2$ other nodes) with $rf = \frac{1 + 1/2^s}{2}$,  and indirectly with $n-2$ other nodes with $rf = \frac{H_n-1-1/2^s}{n-2}$. $\sum_{v'' \in V \setminus \{v'\}} rf(v'') = H_n^s$ \\
            $\revenue = b\cdot[2\cdot 1/2]\binom{n-1}{2}\frac{\frac{H_n-1-1/2^s}{n-2}}{ H_n^s} = b \cdot \frac{n-1}{2} \cdot \frac{H_n^s-1-1/2^s}{H_n^s}$
            \item $\fees = -a/H_n^s$, $L = - l \cdot (n -1)$.
        \end{itemize}    
     -- The leaf node can add connection to only one other $1$ leaf node.
        \begin{itemize}
            \item $\revenue = 0$
            \item $u$ connects to one central node $rf = 1$, $1$ node with $rf = \frac{1}{2^s}$, and $n-2$ other nodes - $rf = \frac{H_{n}^s-1-\frac{1}{2^s}}{n-2}$. $\sum = H_n^s$
            
            $\fees = - a (n-2) \cdot \frac{\frac{H_{n}^s-1-\frac{1}{2^s}}{n-2}}{H_n^s} = -a \cdot (H_{n}^s-1-\frac{1}{2^s})/H_n^s$,  $L = - l \cdot 2 $.
        \end{itemize}
    -- The leaf node can add connections to $2 \leq i \leq n-2$ other leaf nodes.
        \begin{itemize}
            \item the leaf nodes $v'$ that $u$ connects to interact with $1$ central node ($rf = 1$), the $u$ node ($rf = 1/2$), $i-1$ other nodes that $u$ connects to $rf = \frac{H_{i+1}^s - 1 - 1/2}{i-1}$, and $n - H_{i+1}^s$ other nodes.
            
            $\sum_{v'' \in V \setminus \{v'\}} rf(v'') = H_n^s$ \\
            $\revenue = b\cdot[2\cdot 1/2]\binom{i}{2}\frac{\frac{H_{i+1}^s-1-1/2}{i-1}}{ H_n^s } = b \cdot \frac{i}{2} \cdot \frac{H_{i+1}^s-1-1/2^s}{H_n^s}$
            \item From the perspective o f $u$, the central node has $rf = 1$, the nodes that the $u$ connects to have $rf = \frac{H_{i+1}^s-1}{i}$, the other nodes have $rf = \frac{H_n^s-H_{i+1}^s}{n-i-1}$. $\sum = H_n^s$.
            
            $\fees = - a (n-i-1) \frac{H_n^s - H_{i+1}^s}{n-i-1}/H_n^s = -a \cdot (H_n^s - H_{i+1}^s)/H_n^s$, $L = - l \cdot (i+1) $.
        \end{itemize}
    -- The leaf node can add connections to $2 \leq i \leq n-2$ leaf nodes and remove the connection with the central node.
        \begin{itemize}
            \item the other leaf nodes $v'$ interact directly, with $2$ nodes, the central node with $rf = 1$, the $u$ node with $rf=1/2$ and indirectly with $i-1$ other nodes with $rf = \frac{H_{i+1}-1-1/2}{i-1}$. $\sum_{v'' \in V \setminus \{v'\}} rf(v'') = H_n$ \\
            $\revenue = b\cdot[2\cdot 1/2]\binom{i}{2}\frac{\frac{H_{i+1}^s-1-1/2^s}{i-1}}{ H_{i+1}^s} = b \cdot \frac{i}{2} \cdot \frac{H_{i+1}^s-1-1/2^s}{H_n}$
            \item From the perspective o f $u$, the central node has $rf = 1$, the nodes that the $u$ connects to have $rf = \frac{H_{i+1}^s-1}{i}$, the other nodes have $rf = \frac{H_n^s-H_{i+1}^s}{n-i-1}$. $\sum = H_n^s$.
            
            $\fees = - a \cdot [ (n-i-1) \frac{H_n^s - H_{i+1}^s}{n-i-1}/H_n^s + 1/H_n^s ]= -a \cdot (H_n^s - H_{i+1}^s+1)/H_n^s$, $L = - l \cdot i $. \end{itemize}
    Now we compare the utility gained by switching to each strategy as opposed to sticking to the default strategy:
     \paragraph{(1) vs (2).} If (1) remains a NE then:
    \begin{gather*}
        -a \cdot \frac{H_n^s-1}{H_n^s} - l \cdot 1 \geq b \cdot \frac{n-1}{2} \cdot \frac{H_n^s-1-1/2^s}{H_n^s} - l \cdot (n) \\
        \iff\\
        a \cdot \frac{H_n^s-1}{H_n^s} + b \cdot \frac{n-1}{2} \cdot \frac{H_n^s-1-1/2^s}{H_n^s} \leq l \cdot (n-1)
    \end{gather*}
    
      \paragraph{(1) vs (3).} If (1) remains a NE, then for any value of the parameter $s \geq 0$:
    
    \begin{gather*}
         -a \cdot \frac{H_n^s-1}{H_n^s} - l \cdot 1 \geq b \cdot \frac{n-1}{2} \cdot \frac{H_n^s-3/2}{H_n^s} - l \cdot (n-1) - \\
        - a/H_n^s \iff\\
        a \cdot \frac{H_n^s-2}{H_n^s} + b \cdot \frac{n-1}{2} \cdot \frac{H_n^s-1-1/2^s}{H_n^s} \leq  l \cdot (n-2)
    \end{gather*}
    
     \paragraph{(1) vs (4).} If (1) remains a NE:
    
    \begin{gather*} -a \cdot \frac{H_n^s-1}{H_n^s} - l \cdot 1 \geq - a \cdot (H_n^s - 1 -1/2^s)/H_n^s - \text{cost of 2}  \\
        \iff\\
        a/H_n^s \leq 2^s \cdot l \cdot 1
    \end{gather*}

     \paragraph{(1) vs (5).} If  (1) remains a NE:
    \begin{gather*}  -a \cdot \frac{H_n^s-1}{H_n^s} - l \cdot 1 \geq b \cdot \frac{i}{2} \cdot \frac{H_{i+1}^s-1-1/2^s}{H_n^s} -\frac{a \cdot (H_n^s - H_{i+1}^s)}{H_n^s} - \\- l \cdot (i + 1) 
        \iff\\
         b \cdot \frac{i}{2} \cdot \frac{H_{i+1}^s-1-1/2^s}{H_n^s} + a \cdot \frac{H_{i+1}^s-1}{H_n^s} \leq l \cdot i \\
    \end{gather*}
     \paragraph{(1) vs (6).} If (1) remains a NE:
    
    \begin{gather*} -a \cdot \frac{H_n^s-1}{H_n^s} - l \cdot 1 \geq b \cdot \frac{i}{2} \cdot \frac{H_{n}^s-1-1/2^s}{H_n^s} -a \cdot \frac{1+H_n^s - H_{i+1}^s}{H_n^s}-\\ - l \cdot i 
        \iff\\
        b \cdot \frac{i}{2} \cdot \frac{H_{n}^s-1-1/2^s}{H_n^s} + a \frac{H_{i+1}^s-2}{H_n} \leq l \cdot (i-1)\\
    \end{gather*}
    \end{environment}
    \vspace{-1.4cm}
\end{proof}

{
\vspace{11pt} Given the result above, we show that if the scale parameter of the distribution is \emph{only} moderately large ($s \geq 2$) and not too many messages are sent out in the network (i.e. $a/H_n^s,b/H_n^s \leq l $), then the star graph is still a Nash Equilibrium. The values $a/H_n^s,b/H_n^s \leq l \cdot 1$ give a bound on the transactions sent to the highest ranked node of a user.
}
\begin{theorem}
The star graph with a number of leaves $n \geq 2$ is a Nash Equilibrium when nodes follow the Zipf distribution with parameter $s\geq 2$ whenever the cost of all edges is equal, and $a/H_n^s,b/H_n^s \leq l \cdot 1$. 
\end{theorem}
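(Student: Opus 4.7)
The plan is to invoke \Cref{thm:star_nash} and verify that each of its three conditions is satisfied under the stronger hypotheses $s \ge 2$, equal edge cost $l$, and $a/H_n^s,\,b/H_n^s \le l$. The driving estimate is the tail bound for the Zipf harmonic sum: since $s \ge 2$, we have $H_n^s \le \sum_{k=1}^\infty 1/k^2 = \pi^2/6 < 2$. Consequently $H_k^s - 1 < 1$, $H_k^s - 1 - 1/2^s < 1$, and $H_k^s - 2 < 0$ for every $k$ and every $s \ge 2$; this is the only analytic fact I will use.

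Condition~(1) of \Cref{thm:star_nash} is immediate: $a/H_n^s \le l \le 2^s l$ since $2^s \ge 4$. For condition~(2), I would divide through by $l$ and apply $a/H_n^s,\,b/H_n^s \le l$ to the two summands on the left, reducing the claim to
$$\tfrac{i}{2}\bigl(H_{i+1}^s - 1 - 1/2^s\bigr) + \bigl(H_{i+1}^s - 1\bigr) \le i.$$
By the tail bound above, both parenthesized quantities lie in $(0,1)$, so the left side is bounded by $i/2 + 1$, which is at most $i$ exactly when $i \ge 2$; this matches the range $2 \le i \le n-1$ in \Cref{thm:star_nash}.

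For condition~(3), the one subtlety is sign bookkeeping, and this is the only point that deserves care. Because $H_{i+1}^s - 2 < 0$ for $s \ge 2$ and $a \ge 0$, the term $a\,(H_{i+1}^s - 2)/H_n^s$ is non-positive and should simply be discarded, rather than bounded via $a/H_n^s \le l$ (which, multiplying a negative factor, would flip the inequality into the wrong direction). What remains is $b \cdot \tfrac{i}{2}\,(H_n^s - 1 - 1/2^s)/H_n^s \le l\,(i-1)$; applying $b/H_n^s \le l$ and $H_n^s - 1 - 1/2^s < 1$ reduces this to $l\,i/2 \le l\,(i-1)$, which again holds iff $i \ge 2$. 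Thus all three conditions of \Cref{thm:star_nash} are satisfied, and the star is a Nash equilibrium. The main obstacle is purely the sign issue in~(3); every other inequality has ample slack under the regime $s \ge 2$.
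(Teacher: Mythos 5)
Your proposal is correct and follows essentially the same route as the paper: invoke Theorem~\ref{thm:star_nash} and discharge its three conditions using the bound $H_n^s \le \pi^2/6 < 2$ for $s \ge 2$ together with $a/H_n^s,\, b/H_n^s \le l$. If anything, your verification is more careful than the paper's terse argument: you check every $i$ in range and explicitly discard the non-positive term $a(H_{i+1}^s-2)/H_n^s$ instead of reducing to a claimed worst case, and your justification of $H_n^s < 2$ via $\sum_k 1/k^2 = \pi^2/6$ is cleaner than the paper's intermediate bound.
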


\begin{proof}
Taking the conditions from Theorem~\ref{thm:star_nash}: 

\begin{enumerate}
    \item $b \cdot \frac{i}{2} \cdot \frac{H_{i+1}^s-1-1/2^s}{H_n^s} + a \cdot \frac{H_{i+1}^s-1}{H_n^s} \leq l \cdot (i)$ (for $2 \leq i \leq n-1$),
    \item $b \cdot \frac{i}{2} \cdot \frac{H_{n}^s-1-1/2^s}{H_n^s} + a \frac{H_{i+1}^s-2}{H_n^s} \leq l \cdot (i-1)$ (for $2 \leq i \leq n-1$),
    \item $a/H_n^s \leq 2^s \cdot l \cdot 1$.
\end{enumerate}

We can see that with our assumptions, condition $3$ holds trivially as $a/H_n^s \leq l \cdot 1$. Moreover, whenever the cost of all edges is equal, conditions $1,2$ are more restrictive, whenever $i$ increases, so the most restrictive case is when $i  = n-1$. Now, because $a/H_n^s \leq l \cdot 1$, condition $2$ is more restrictive than $1$. Finally, condition $2$ holds, because $a/H_n^s,b/H_n^s \leq l \cdot 1$, and for $s \geq 2$,
$H_n^s = \sum_{i = 1}^{n} \dfrac{1}{i^s} \leq \sum_{j = 1}^{+\infty} j \cdot \dfrac{1}{j^s} \leq 2.$
\end{proof}
We also show that the path graph essentially will never become a Nash Equilibrium.
\begin{theorem}
A path graph is never a Nash Equilibrium when nodes transact with each other according to the Zipf distribution with parameter $s\geq 0$.
\end{theorem}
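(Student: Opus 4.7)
The plan is to exhibit, for any path graph $u_0 - u_1 - \cdots - u_{n-1}$ with $n \geq 3$ (the case $n = 2$ is a single edge for which neither endpoint has a non-disconnecting alternative), a unilateral deviation that strictly improves some player's utility, uniformly in the Zipf parameter $s \geq 0$.

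The natural candidate deviation is for an endpoint $u_0$ to open a new channel directly to the opposite endpoint $u_{n-1}$. I would decompose the induced change in $\mathcal{U}_{u_0}$ into three terms: (i) the channel cost increases by $l$; (ii) expected fees paid by $u_0$ strictly decrease, because the chord lowers $d(u_0, u_j)$ from $j$ to $\min(j, n-j)$ for every $j \geq 1$, giving a weighted saving
$$ a \sum_{j=1}^{n-1} p^{trans}_{u_0, u_j}\bigl(j - \min(j, n-j)\bigr);$$
and (iii) expected revenue strictly increases, because whenever $j - 1 > n - j$ the shortest path from $u_1$ to $u_j$ now detours as $u_1 \to u_0 \to u_{n-1} \to u_{n-2} \to \cdots \to u_j$, so $u_0$ collects new routing fees proportional to $b \cdot p^{trans}_{u_1, u_j}$ on these re-routed flows. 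Symmetric statements hold for $u_{n-1}$.

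The main analytical challenge is to show that (ii) $+$ (iii) strictly exceeds $l$ regardless of $s$. For small $s$ the transaction distribution is near-uniform, so term (ii) alone dominates because the chord simultaneously shortens many long paths. For large $s$ the probability mass concentrates on the highest in-degree nodes (the interior vertices), so term (ii) becomes negligible; however, term (iii) takes over because the rerouted shortest paths between near-endpoint interior nodes and nodes on the opposite half of the path carry precisely that concentrated high-rank mass, and these paths must pass through $u_0$ or $u_{n-1}$.

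The hardest step will be assembling these two regimes into a single unconditional inequality. My plan is to first establish the statement for all $n$ above some threshold $n_0$ using a crude $s$-independent lower bound on the fee savings (which already overwhelms $l$ once the path is long enough), and then verify the finitely many small cases $n \in \{3, \ldots, n_0\}$ by direct computation, falling back to an alternative deviation (for instance, the near-endpoint node $u_1$ opening a chord to $u_3$, which reduces $u_1$'s routing distance while still earning $u_1$ additional revenue from $u_0$'s outgoing transactions) whenever the endpoint-to-endpoint chord is not by itself sufficient. Combining these cases yields that no path on $n \geq 3$ vertices is a Nash equilibrium.
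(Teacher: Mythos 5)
Your proposed deviation --- an endpoint opening a \emph{new} chord to the opposite endpoint --- cannot establish this theorem, because it strictly increases the number of channels and hence the channel cost by $l$. The statement carries no hypothesis bounding $l$ in terms of $a$ and $b$ (in contrast to the star-graph theorems in the same section, which do impose such conditions), so for $l$ large relative to $a$ and $b$ the inequality you need --- fee savings plus new routing revenue exceeding $l$ --- is simply false. This defeats your main deviation and your fallback (the $u_1$-to-$u_3$ chord) alike, since both add an edge; the two-regime analysis in $s$ and the finite case check for small $n$ cannot repair a failure that lives in the cost parameter rather than in $s$ or $n$.

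The paper's proof instead uses a cost-neutral deviation: an endpoint $u_0$ \emph{rewires} its single edge, dropping the channel to its neighbour and connecting to a more central node of the path. Its channel cost stays at $l$ (still one edge, still split equally), its expected revenue stays at $0$ (it remains a degree-one node and hence is never an intermediary, for any $s \geq 0$), and its expected fees strictly decrease because its distances to the rest of the path shrink. This yields a strict improvement for every choice of $a$, $b$, $l$ and every $s \geq 0$, which is what the unconditional claim requires. If you want to salvage your write-up, replace the chord addition by this edge swap; the only care needed is that the transaction probabilities $p^{trans}_{u_0,v}$ are computed on the graph with $u_0$ removed and are therefore unaffected by the rewiring, so the fee comparison reduces to a pure comparison of (probability-weighted) distances.
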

\begin{proof}
Since the cost of any edge is split equally between both parties, the endpoints of the path would always prefer joining to a node that is not an endpoint of the path. In this case, even when $s=0$, their expected revenue factor still remains $0$, but the cost of the expected fees naturally gets lower. 
\end{proof}
We finally show that circle graph cannot be a NE when it is sufficiently large. 
\begin{restatable}{theorem}{circle}\label{thm:circle}
The Circle graph does not form a Nash Equilibrium for all $n \ge n_0$, for some $n_0$, when nodes transact with each other according to the Zipf distribution with $s\geq 0$.
\end{restatable}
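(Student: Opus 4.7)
The plan is to exhibit, for every fixed $s\geq 0$, a profitable unilateral deviation available to some vertex of the circle whenever $n$ is large enough; the deviation is to open a single chord across the circle. Fix an arbitrary vertex $u$.

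First I will reduce the transaction distribution from $u$'s perspective to the uniform one. Because every vertex of the circle has in-degree $2$, the rank-averaging convention of Section~\ref{subsection:model_params} assigns the same rank factor to every $v \in V \setminus \{u\}$, giving $p^{trans}_{u,v} = 1/(n-1)$ for every $v$, independently of $s$. Since the multiset of shortest-path distances from $u$ to the remaining vertices of the circle consists (up to one exception for even $n$) of the values $1,2,\ldots,\lfloor n/2 \rfloor$ each repeated twice, the distance sum is $\Theta(n^2)$ and hence $\fees = \Theta(n)\cdot N_u\, f_{avg}^T$. So each vertex's fee burden in the equilibrium candidate grows linearly in $n$.

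Next I analyse the deviation in which $u$ opens a chord to its antipodal vertex $w$. The only extra cost is the constant $l$. A short case analysis using $d'(u,v)\le 1+d(w,v)$ shows that $\sum_{v\neq u} d'(u,v)$ shrinks to $\Theta(n^2)/2$, a $\Theta(n^2)$ drop. To convert this drop into a fee reduction I will compute the new transaction distribution from $u$: now $w$ alone has in-degree $3$ and occupies rank $1$, while the remaining $n-2$ vertices (all of in-degree $2$) share the averaged rank factor $(H_{n-1}^s-1)/(n-2)$, so $p'^{\,trans}_{u,w}=1/H_{n-1}^s$ and $p'^{\,trans}_{u,v}=(H_{n-1}^s-1)/((n-2)\,H_{n-1}^s)$ for every other $v$. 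I then split into two regimes: when $0\leq s\leq 1$, $H_{n-1}^s\to\infty$ and the distribution stays essentially uniform, so the fee drop is $\Theta(n^2)/n=\Theta(n)$ directly; when $s>1$, $H_{n-1}^s=\Theta(1)$, so $\Omega(1)$ of the transaction mass lands on $w$, which is now at distance $1$, making the new fees $O(1)$ and the savings again $\Omega(n)$.

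By the monotonicity of $\revenue$ established in Section~\ref{properties}, the deviation can only increase $u$'s revenue, so the overall utility change is at least $(\fees-\fees')-l=\Omega_s(n)-l$, which is strictly positive for every $n\geq n_0(s)$. This contradicts the Nash equilibrium assumption. The main obstacle I anticipate is keeping the fee-reduction bound uniform in $s\geq 0$, since the small-$s$ and large-$s$ regimes give qualitatively different post-deviation distributions; the two-case split above is the cleanest way to handle this. A secondary subtlety is that the deviation perturbs other vertices' degrees and hence the Zipf ranks they see, but since we only need to track $u$'s own utility change this does not affect the argument.
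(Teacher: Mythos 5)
Your proposal follows essentially the same route as the paper: the profitable unilateral deviation is the antipodal chord, the extra channel cost is the constant $l$, and the gain is shown to grow linearly in $n$. The paper additionally computes the post-deviation revenue explicitly with the rank factors (it rises from roughly $\frac{b}{n}\cdot\frac{n^2}{4}$ to roughly $\frac{b}{n}\cdot\frac{5n^2}{16}$), whereas you only invoke monotonicity of $\revenue$; that is acceptable here since the fee savings alone already dominate $l$. One intermediate claim of yours is wrong, though harmless: for finite $s>1$ the post-deviation fees are \emph{not} $O(1)$, because the $n-2$ non-antipodal vertices still carry total transaction mass $(H_{n-1}^s-1)/H_{n-1}^s$, a constant bounded away from $0$, and sit at average distance $\Theta(n)$, so $\fees$ remains $\Theta(n)$ after the deviation. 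Your conclusion survives anyway: the distance sum drops from about $n^2/4$ to about $n^2/8$ while each non-antipodal vertex's probability weakly decreases (since $H_{n-1}^s\le n-1$) and the antipodal vertex moves to distance $1$, so the savings are $\Omega(n)$ uniformly in $s\ge 0$, which is all the argument needs.
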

\begin{proof}
Assume that we have a circle graph with $n+1$ nodes.
-- Default strategy for a node $u$ is not to add or remove any edges. \begin{itemize}
            \item In this case $u$ is an intermediary node to all of the pairs of nodes for which the shortest path goes through this node. They rank each other equally, so each node ranks other nodes with equal $rf = H_n^s/n$, thus $\sum rf = H_n^s$, finally
            $\revenue = b\cdot \frac{H_n^s/n}{H_n^s} 2 \cdot (\binom{n}{2}-\binom{n/2}{2}-n/2\cdot n/2) \approx \frac{b}{n} \cdot n^2/4$ 
            \item The node $u$ interacts with $n$ nodes with $rf = H_n^s/n$. $2$ of them are in distance $0$, $2$ are in distance $1$, and so on. Finally at most $2$ of them are in distance $\lfloor n/2 \rfloor$. 
            $\fees = - a \cdot \frac{H_n^s/n}{H^s_n} \cdot 2 \cdot ( 1 + 2 + \ldots + n/2) \approx \frac{-a}{n} \cdot n^2/4$.
            \item $L = - l \cdot 1 $.
           \end{itemize}
-- A strictly better strategy for the node $u$ is to connect to its opposite node. \begin{itemize}
            \item In this case $u$ is an intermediary node to all of the pairs of nodes for which the shortest path goes through this node. The opposite node $u$ ranks $u$ with $rf = 1$ and all of the other nodes with $rf = \frac{H_n^s-1}{n-1}$ the other nodes rank 2 nodes with $rf = \frac{1+1/2^s}{2}$, and all of the other nodes with $rf = \frac{H_n^s-1-1/2^s}{n-2}$, thus $\sum rf = H_n^s$. We will thus asymptotically count only the weakest $rf = \frac{H_n^s-1-1/2^s}{n-2}$ factor. Finally
            $\revenue = b\cdot \frac{H_n^s-1-1/2^s}{n-2} \cdot 2\cdot(\frac{n}{4}\cdot\frac{n}{2}+ \frac{1}{2}\cdot\frac{n}{4}\
            \cdot \frac{n}{4}) \approx \frac{b}{n} \cdot n^2(5/16)$ 
            \item The node $u$ interacts with $n-1$ nodes with $rf = (H_n^s-1)/(n-1)$, and directly with one node with $rf=1$. We calculate the closeness as:
            \begin{align*}
            \nonumber
            \fees \leq - a \cdot \frac{(H_n^s-1)/(n-1)}{H^s_n} \cdot \frac{3 \frac{n}{4}(\frac{n}{4}-1)}{2} \\+
            \frac{n/2+n/4}{2}\cdot\frac{n}{4}) =  \frac{-a(H_n^s-1)/(n-1)}{H^s_n} \leq \frac{3}{16}n^2.    
            \nonumber
            \end{align*}
            \item $L = - l \cdot 1.$
           \end{itemize}
           \vspace{-0.4cm}
\end{proof}
\vspace{3pt}

\section{Related work}
Strategic aspects of cryptocurrencies, and more generally the blockchain technologies, 
have attracted a lot of attention in the literature \cite{chen2021game,amoussou2020rational,lewenberg2015bitcoin} as by their very nature, they are created to facilitate interactions between self-interested parties in a decentralised manner.

Apart from the works discussed in the introduction (\cite{avarikioti2020ride,avarikioti2019payment,ersoy2019profit,lnecon}), perhaps the closest research line to which our paper contributes is the one on creation games. 
In a well-known work by Fabrikant et al.~\cite{FabrikantLMPS03}, players choose a subset of other players to connect to in order to minimise their total distance to all others in the network. 
The result of Fabrikant et al. was later strengthened by Albers et al.~\cite{AlbersEEMR14}, and also extended to the weighted network creation game setting. 
Ehsani et al.~\cite{Ehsani2011} considers the network creation game with a fixed budget for each player, thus constraining the number of connections each player can make. Another well-known body of research of this kind are network formation games~\cite{bala2000noncooperative,jackson2005survey}.
All of these works, however, consider the problem of network creation in general networks which do not take into account fees and channel collateral which are specific to  PCNs.



Our work is also closely related to the study of stable network topologies for real-world networks (e.g. social and communication networks) that are formed by the interaction of rational agents~\cite{DemaineHMZ07,Bilo0LLM21}.
Demaine et al.~\cite{DemaineHMZ07} show that all equilibrium networks satisfy the small world property, that is, these networks have small diameters.
Bilo et al.~\cite{Bilo0LLM21} establish properties on the diameter, clustering and degree distribution for equilibrium networks. 
In~\cite{avarikioti2019payment,avarikioti2020ride}, Avarikioti et al. consider stable graph topologies in the context of PCNs.
Our work extends the analysis of Avarikioti et al.~\cite{avarikioti2020ride} and considers stable graph topologies in PCNs under a non-uniform distribution of transactions between users. 


\section{Conclusion and Future Work}\label{sec:conclusion}
In this paper, we modeled and analysed the incentive structure behind the creation of PCNs. We first focused on the perspective of a new user who wants to join the network in an optimal way. To this end, we defined a new user’s utility function in terms of expected revenue, expected fees, on-chain cost of creating channels, and opportunity costs, while accounting for realistic transaction distributions.

We also introduced a series of approximation algorithms under specific constraints on the capital distribution during the channel creation: 
(a) We first presented a linear time $1-\frac{1}{e}$ approximation algorithm when a user locks a fixed amount to all channels; thus, providing an efficient approach for users who wish to lower computational costs.
(b) We further provided a pseudo-polynomial time $1-\frac{1}{e}$ approximation algorithm  when users may lock varying, but discretized by $m$, amounts to different channels. This setting applies to most real-life scenarios  but comes with a computational overhead that depends on $m$.
(c) Finally, we proposed a $1/5$ approximation solution when a user can pick the amounts from a continuous set. 
    We used a modified utility function, the benefit function, which may be leveraged by a user to test whether assuming continuous funds yields  unexpected profits.
Altogether, our results in this section show that depending on the number of assumptions a new user joining a PCN wants to make, the user has a range of solutions to deploy to optimize the way they connect to the network. 

Lastly, we analysed the parameter spaces in our underlying model and conditions under which the star, path, and circle graph topologies form a Nash Equilibrium. Our analysis indicates that under a realistic transaction model, the star graph is the predominant topology, enhancing the results of~\cite{avarikioti2020ride}.

We highlight three interesting directions for future work. First, it would be beneficial to develop more advanced algorithms for maximizing the general utility function that also come with guarantees on the approximation ratio.
Second, we believe there are still avenues in which our model can be made more realistic, for instance, by considering a more realistic cost model that takes into account interest rates as in~\cite{lnecon}. 
Lastly, as the accuracy of our model depends on estimations of the underlying PCN parameters, for instance, the average total number of transactions and the average number of transactions sent out by each user, developing more accurate methods for estimating these parameters may be helpful.

 \section*{Acknowledgments}
    The work was partially supported by the Austrian Science Fund (FWF) through the project CoRaF (grant 2020388). It was also partially supported by NCN Grant 2019/35/B/ST6/04138 and  ERC Grant 885666.

\bibliographystyle{IEEEtran} 
\bibliography{main}

\end{document}